\documentclass{article}
\usepackage[margin=1.2in]{geometry}
\usepackage{amsmath,amssymb}
\usepackage{braket}
\usepackage{xcolor}
\usepackage{graphicx,subcaption,mathtools}
\usepackage{float} 
\usepackage[
backend=bibtex,  
style=numeric   
]{biblatex}
\newcommand{\Tr}{\operatorname{Tr}}

\DeclareMathOperator*{\esssup}{esssup}
\DeclareMathOperator*{\argmax}{argmax}

\let\rm\mathrm
\usepackage{amsfonts}
\usepackage{dsfont}

\usepackage{amsthm}
\date{}
\theoremstyle{plain} 
\newtheorem{theorem}{Theorem}

\newtheorem{proposition}{Proposition}
\newtheorem{corollary}{Corollary}

\theoremstyle{definition}

\theoremstyle{remark} 

\title{Universal Sequential Changepoint Detection of Quantum Observables via Classical Shadows}
\author{Matteo Zecchin$^\dag$, Osvaldo Simeone$^\ddagger$ , and Aaditya Ramdas$^\mathsection$}
\author{%
  Matteo Zecchin\thanks{Communication Systems Department, EURECOM, 06904 Sophia Antipolis, France.
    \texttt{zecchin@eurecom.fr} },
    \quad
  Osvaldo Simeone\thanks{Institute for Intelligent Networked Systems, Northeastern University London, London E1 8PH, UK.
    \texttt{o.simeone@northeastern.edu},}
  \quad
   and Aaditya Ramdas\thanks{ Department of Statistics \& Data Science, Machine Learning Department, Carnegie Mellon University, Pittsburgh, USA;
    \texttt{aramdas@cmu.edu}}%
}
\begin{document}

\maketitle

\begin{abstract}
We study sequential quantum changepoint detection in settings where the pre- and post-change regimes are specified through constraints on the expectation values of a finite set of observables. We consider an architecture with separate measurement and detection modules, and assume that the observables relevant to the detector are unknown to the measurement device. For this scenario, we introduce shadow-based sequential changepoint e-detection (eSCD), a novel protocol that combines a universal measurement strategy based on classical shadows with a nonparametric sequential test built on e-detectors. Classical shadows provide universality with respect to the detector's choice of observables, while the e-detector framework enables explicit control of the average run length (ARL) to false alarm. Under an ARL constraint, we establish finite-sample guarantees on the worst-case expected detection delay of eSCD. Numerical experiments validate the theory and demonstrate that eSCD achieves performance competitive with observable-specific measurement strategies, while retaining full measurement universality.

\end{abstract}
\section{Introduction}
\subsection{Motivation}
Sequential changepoint detection aims to detect, as quickly as possible, a change in the statistical behavior of a stream of observations, while controlling the frequency of false alarms \cite{page1954continuous}. When the observations correspond to measurement outcomes on quantum systems, the task is referred to as sequential \emph{quantum} changepoint detection  \cite{sentis2018online,fanizza2023ultimate}.

In its standard formulation, sequential quantum changepoint detection assumes a changepoint model in which the quantum state is in a {known} pre-change state up to an unknown changepoint time, and in a {known} post-change state thereafter \cite{akimoto2011discrimination,sentis2016quantum,sentis2018online,fanizza2023ultimate,grootveld2026asymptotically}. In this setting, the problem reduces to designing measurements that best separate the two hypotheses \cite{hayashi2001asymptotics}, and then applying a classical sequential test to the induced measurement outcome process \cite{page1954continuous}. In this regime, performance limits and optimal designs are characterized in terms of information quantities that capture how distinguishable the given  pre- and post-change states become after measurement \cite{fanizza2023ultimate}. Recent extensions replace the simple post-change model with a {composite}  alternative, in which the post-change state belongs to a family of states \cite{grootveld2026asymptotically}.

\begin{figure}
	\centering
	\includegraphics[width=0.75\textwidth]{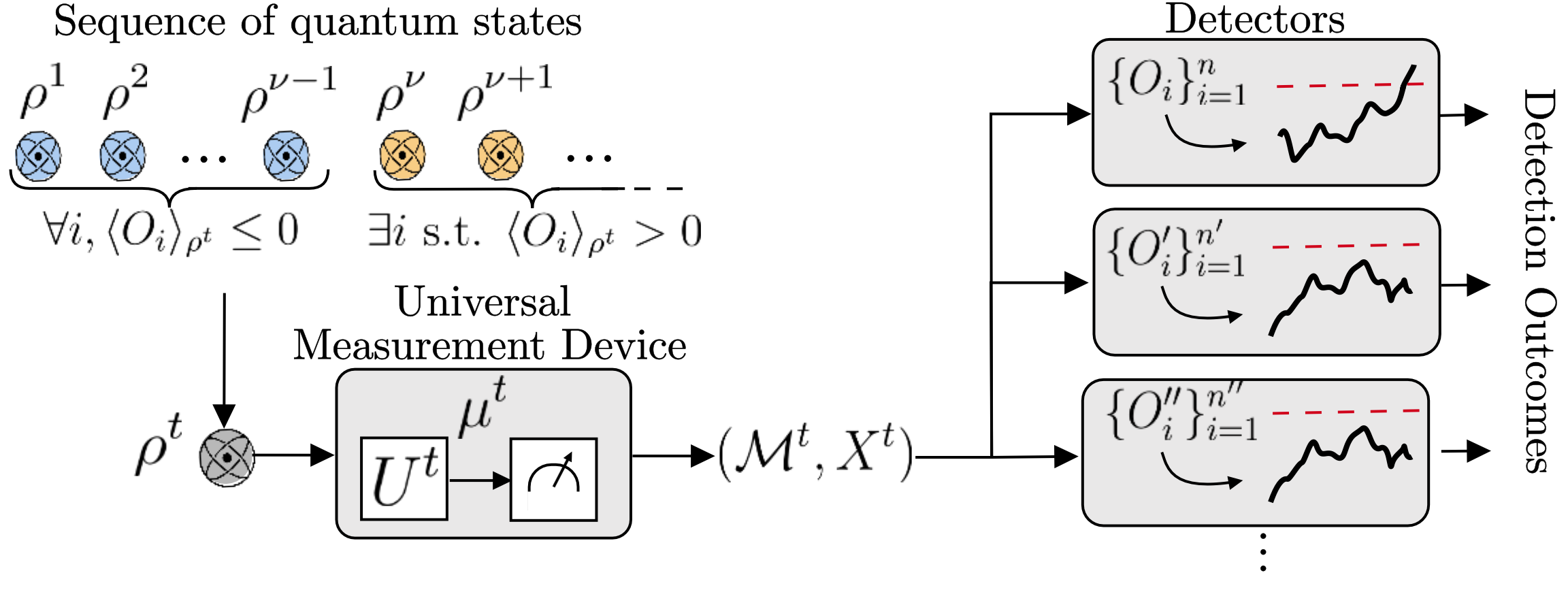}
	\caption{ A sequence of quantum states  $\{\rho^t\}_{t\geq 1}$ is processed by a measurement device that selects a measurement setting $\mathcal{M}^t$ and obtains a measurement output $X^t$ at each time step $t$. The pair $(\mathcal{M}^t,X^t)$ is then processed by a detection algorithm that aims to determine whether the expected value $\langle O_i\rangle_{\rho^t}$ of at least one observable $\{O_i\}^n_{i=1}$ has become positive.  The measurement device must be designed to support different detection algorithms tailored to different sets of observables  $\{O_i\}^n_{i=1}$.}
	\label{fig:setting}
\end{figure}

In contrast to this line of research, in many quantum applications, the experimenter is practically interested in characterizing a device or quantum state through a small number of \emph{observables}, rather than via a full specification of the system's density matrix \cite{aaronson2018shadow,huang2020predicting}. For example, in variational quantum algorithms, such as the variational quantum eigensolver (VQE), the primary goal is to estimate the expectation value of a problem Hamiltonian \cite{cerezo2021variational,simeone2022introduction};  and in
quantum simulation of condensed matter systems or molecular dynamics, researchers measure local observables, such as spin-spin correlation functions, which capture the relevant physics without requiring full tomographic reconstruction \cite{feynman2018simulating,georgescu2014quantum}. Other examples include quantum error correction \cite{gottesman1997stabilizer} and quantum sensing \cite{degen2017quantum}. 

Furthermore, it is common in quantum information theory and quantum cryptography to adopt a framework in which the quantum \emph{measurement devices} (in the ``lab'') are treated as separate entities from the classical \emph{detection devices}   performing statistical testing and data analysis (the ``statistician" or ``verifier") \cite{hensen2015loophole}. This conceptual, and often physical, separation is fundamental, for instance, to device-independent protocols, where security or validity guarantees depend only on observable statistics rather than on detailed models of the quantum devices themselves \cite{arnon2018practical}.

Motivated by these  considerations, we consider the setting illustrated in Figure~\ref{fig:setting}, which differs from prior art in two fundamental ways:\begin{itemize}
    \item \textbf{Observable-based changepoint:} The pre- and post-change regimes are determined by constraints on the expectation values of a finite collection of observables. Accordingly,  for any fixed set of observables, the induced pre- and post-change families of quantum states are both composite, and the resulting measurement statistics are nonparametric.
    \item \textbf{Universal measurement device:} The set of observables to be monitored may be chosen by a detection device only after data acquisition by a separate measurement device. Since the measurement stage is agnostic to the downstream choice of observables, the measurement strategy cannot be tailored to a specific change model. Instead, the procedure must be \emph{universal} supporting arbitrary observable sets (within a given class). 
\end{itemize} 

To address these challenges, we propose a novel protocol, referred to as \emph{shadow-based sequential changepoint e-detection}  (eSCD), that combines a universal measurement strategy
based on classical shadows with a  nonparametric sequential change detector. Through this design, eSCD leverages the ``measure once, test many'' property of classical shadows \cite{huang2020predicting} to define a universal measurement strategy whose outcomes can be used to detect changes via the recently developed framework of e-detectors \cite{shin2022detectors}. Furthermore, the e-detector framework enables explicit
control of the average run length (ARL) to false alarm, while supporting finite-sample guarantees on the worst-case expected
detection delay.

\subsection{Related Work}

Quantum changepoint detection was originally studied in the \emph{offline}  setting, in which the goal is localization of a changepoint within a finite sequence of quantum states \cite{akimoto2011discrimination,sentis2017exact,sentis2016quantum}. Later extensions focused on  \emph{sequential} settings in which the pre-change and post-change states are \emph{known} pure states \cite{sentis2018online}. 

In sequential scenarios, fundamental limits for quickest quantum changepoint detection were shown to be characterized by the measured quantum relative entropy \cite{simeone2026classical} between pre- and post-change state \cite{fanizza2023ultimate}. Schemes that asymptotically achieve these bounds were obtained by applying classical (non-quantum) changepoint detection algorithms, such as CUSUM \cite{page1954continuous}, to the sequence of measurement outcomes generated by the optimal measurement attaining the measured quantum relative entropy. 

For changepoint models with composite post-change states, the performance of sequential changepoint detection has been studied under the assumption that the post-change family admits a suitable discretization \cite{fanizza2023ultimate} or that the post-change state can be estimated \cite{grootveld2026asymptotically}. In this regime, fundamental limits for sequential changepoint detection were characterized when joint measurements over growing blocks of subsequent quantum states are allowed. Recently, windowed CUSUM-type procedures have been shown to achieve these limits asymptotically when the block size grows to infinity \cite{grootveld2026asymptotically}. Other recent extensions of quantum changepoint detection include continuous families of quantum states \cite{john2025fundamental} and changepoint detection for quantum channels \cite{nakahira2025unambiguous}.

In the setting considered in this paper, classical sequential changepoint schemes that rely on optimizing the measurement strategy are not directly applicable. In fact, as shown in Figure \ref{fig:setting}, we impose that the measurement device be universal, i.e., agnostic to the set of tested observables.  The \emph{classical shadow} framework provides randomized measurement protocols and estimation procedures that enable the prediction of many properties of quantum states without tailoring the measurement to any specific downstream quantity \cite{huang2020predicting}. In our setting, we use this ``measure once, test many'' property to construct, for each candidate observable, a sequential test statistic.

The induced pre- and post-change measurement statistics returned by classical shadows are composite and \emph{nonparametric}. In fact, the  pre- and post-change measurement statistics are only specified by the mean of the given observables.  To address the resulting testing problem, we adopt the recently developed framework of \emph{e-detectors} \cite{shin2022detectors},  a nonparametric extension of standard techniques such as CUSUM. E-detectors build sequential changepoint tests from e-processes \cite{ramdas2025hypothesis}, offering finite-sample control of false alarms under minimal distributional assumptions and enjoying strong efficiency guarantees \cite{ram2026asymptotically}. 

A related methodology based on e-processes has recently been developed in quantum settings for sequential composite hypothesis testing \cite{zecchin2025quantum} and anytime quantum tomography \cite{cumitini2026anytime}.

\subsection{Contributions}
In this work, as illustrated in Figure \ref{fig:setting}, we introduce eSCD, a novel quantum changepoint detection protocol incorporating a universal measurement device based on classical shadows and a testing device based on e-detectors. 
Our main contributions are as follows:
\begin{itemize}
	\item \textbf{Observable-based changepoint detection:} We formulate a novel sequential quantum changepoint detection problem in which the changepoint model is specified via constraints on the expectations of a collection of observables. We consider an architecture with
separate measurement and detection modules, and assume that the observables relevant to the detector are unknown to the measurement device. 
	\item \textbf{Shadow-based sequential changepoint e-detection (eSCD):} We propose eSCD, a novel protocol that combines a universal detector based on classical shadows \cite{huang2020predicting} with e-detectors \cite{shin2022detectors} for nonparametric changepoint detection. 
	\item \textbf{Theoretical guarantees:} We prove that eSCD controls the \emph{average run length} (ARL) to false alarm, and establish finite-sample bounds on the worst-case expected detection delay \cite{shin2022detectors}.
	\item \textbf{Experimental validation:} We present numerical experiments benchmarking eSCD against detection schemes based on matched measurements.
\end{itemize}

The rest of the paper is organized as follows. Section~\ref{sec:prof_definition} formulates the sequential quantum changepoint detection problem based on observables, and introduces the key performance metrics. Section~\ref{sec:escd} presents the proposed eSCD framework, including the observable-agnostic randomized measurement policy based on classical shadows and the e-detectors, together with the resulting theoretical guarantees. In Section \ref{sec:experiments} we provide experimental results evaluating ARL control and efficiency of eSCD, including comparisons with changepoint detection schemes based on matched measurement baselines. Section \ref{sec:conclusions} concludes the paper with future research directions.

\section{Problem Definition}
\label{sec:prof_definition}
In this section, we formulate the problem under study of sequential quantum changepoint detection of quantum observables and define the relevant performance metrics.

\subsection{Sequential Changepoint Detection}
\label{sec:gen_prob}
As illustrated in Figure \ref{fig:setting}, we target a setting with separate measurement and detection devices, in which measurement may be used to detect different types of changes in the stream of quantum states under observation. We first describe the changepoint model assumed in this work

\hspace{\parindent} \emph{1) Changepoint Model:} As seen in Figure \ref{fig:setting}, the quantum measurement device has access to a sequence of $d$-qubit systems prepared in quantum states $\{\rho^t\}_{t \ge 1}$ that are indexed by a discrete time step $t \in \{1,2,\dots\}$.  Each state $\rho^t \in \mathcal{D}(\mathcal{H}^d)$, is a density matrix on the $d$-qubit, $2^d$-dimensional, Hilbert space $\mathcal{H}^d$.

Up to some unknown changepoint time $\nu\geq 1$, the state $\rho^t$ is any, unknown, state $\rho_0$ in a set $\mathcal{S}_0$; while, after that point, it corresponds to any, unknown, state $\rho_1$ in a different set $\mathcal{S}_1$ such that $\mathcal{S}_0\cap\mathcal{S}_1=\emptyset$. 
Thus, the state at time $t$ is given by
\begin{align}
\label{eq:sequence_of_states}
\rho^t =
\begin{cases}
	\rho_0\in\mathcal{S}_0, & t < \nu,\\
	\rho_1\in\mathcal{S}_1, & t \ge \nu.
\end{cases}
\end{align}

Specifically, the pre- and post-change state families $\mathcal{S}_0$ and $\mathcal{S}_1$ are characterized through constraints on the expectations of a collection of $n$ observables.  To elaborate, fix a set of $n$ observables $\{O_i\}_{i=1}^n$ whose expectation values determine the two regimes. Recall that an observable is a $2^d\times 2^d$ positive semi-definite matrix. We assume that any pre-change states $\rho_0\in\mathcal{S}_0$ satisfy the inequality
\begin{align}
	\langle O_i\rangle_{\rho_0} = \Tr(\rho_0 O_i) \le 0,
	\qquad \forall i \in \{1,\dots,n\},
\end{align}
while after the change, there exists at least one index $i$ for which the constraint is violated
\begin{align}
	\langle O_i\rangle_{\rho_1} = \Tr(\rho_1 O_i) > 0
	\quad \text{for some } i \in \{1,\dots,n\}.
\end{align}
For instance, the average energy of all particles encoding the qubits may be smaller than a threshold before the changepoint, while at least one particle exhibits a higher energy after the change. As another example, before the change, a set of entanglement witnesses \cite{terhal2000bell} would detect entanglement, while after the change, at least one witness would fail to detect entanglement, possibly indicating decoherence.

Equivalently, the pre- and post-change sets are given by 
\begin{align}
	\label{eq:pre_hypothesis}
	\mathcal{S}_0 &= \Big\{ \rho \in \mathcal{D}(\mathcal{H}^d) : 
	\max_{i \in\{1,\dots,n\}} \langle O_i\rangle_{\rho} \le 0 \Big\},
\end{align}
and
\begin{align}
	\label{eq:post_hypothesis}
	\mathcal{S}_1 &= \Big\{ \rho \in \mathcal{D}(\mathcal{H}^d) :
	\exists i\in\{1,\dots,n\} \text{ s.t. } \langle O_i\rangle_{\rho} > 0 \Big\}.
\end{align}

Note that the formulation in \eqref{eq:pre_hypothesis}--\eqref{eq:post_hypothesis} assumes, for simplicity, that the separating thresholds for each observable are zero. However, this entails no loss of generality, since any one-sided constraint of the form $\langle O_i\rangle_\rho \le a_i$ can be rewritten using the shifted observable $O_i’ = O_i - a_i I$. Similarly, a two-sided constraint $b_i \le \langle O_i\rangle_\rho \le a_i$ can be expressed through the shifted observables $O'_i = O_i - a_i I$ and $O''_i = b_i I - O_i$.

\emph{2) Universal Measurement Device:}   At each time $t$, based on a measurement strategy $\mu^t(\cdot)$, the measurement device chooses a POVM $\mathcal{M}^t = \{\Pi_x^t\}_{x \in \mathcal{X}}$ and measures the quantum state $\rho^t$. Accordingly, it observes a measurement outcome $X^t \in \mathcal{X}$ with probability distribution given by the Born rule
\begin{align}
	\label{eq:measurement_distribution}
	\Pr[X^t = x \mid \mathcal{M}^t, \rho^t] = \Tr(\Pi_x^t \rho^t).
\end{align}
The measurement strategy $\mu^t(\cdot)$ is allowed to be adaptive, as it may depend on the past measurement outcomes $X^{1:t-1} = (X^1,\dots,X^{t-1})$ and POVMs $\mathcal{M}^{1:t-1} = (\mathcal{M}^1,\dots,\mathcal{M}^{t-1})$. We denote the history of measurement outcomes and POVMs up to time $t$ as $H^{t} = (\mathcal{M}^{1:t},X^{1:t})$.

The measurement device is assumed to be agnostic to the observables $\{O_i\}^n_{i=1}$, supporting a universal measurement policy $\{\mu^t(\cdot)\}$ that may apply to an arbitrary set of observables. As illustrated in Figure \ref{fig:setting}, this setting allows for multiple changepoint detection defined by distinct observable sets to be performed by the same measurements. This universal property is important, since quantum measurements are destructive, and thus the quantum systems are no longer available in their original states $\{\rho^t\}_{t\geq 1}$ after measurement is applied.

\emph{3) Detection:}  Each measurement pair $(X^t,\mathcal{M}^t)$ is forwarded to a detection algorithm, which uses a decision rule $\phi^t(\cdot)$ to map the history $H^{t}$ to a detection outcome $D^t \in \{0,1\}$. The detection outcome $D^t = 0$ indicates that no changepoint has been detected so far, while $D^t = 1$ suggests that a changepoint has occurred, i.e.,
\begin{align}
    D^t=\phi^t(H^t)=\begin{cases}
        0,\quad  \text{ (no change detected)},\\
        1,\quad  \text{ (change detected)}.
    \end{cases}
\end{align}

The detection rule $\phi^t(\cdot)$ can be tailored to the observables $\{O_i\}^n_{i=1}$ that are known to the detector.

\emph{4) Performance Measures:}  In the following, for a fixed measurement policy $\{\mu^t(\cdot)\}_{t\geq 1}$ and decision rule $\{\phi^t(\cdot)\}_{t\geq 1}$, we use $\mathbb{P}_{\rho_0,\nu,\rho_1}[\cdot]$ and $\mathbb{E}_{\rho_0,\nu,\rho_1}[\cdot]$ to denote respectively the probability and expectation operators under quantum states with pre-change state $\rho_0$, changepoint time $\nu$, and post-change state $\rho_1$. Furthermore, if $\nu=\infty$, and no change occurs, we use the shorthand notation $\mathbb{P}_{\rho_0}[\cdot]$ and $\mathbb{E}_{\rho_0}[\cdot]$; while if $\nu=1$, and thus a change occurs immediately, we use $\mathbb{P}_{\rho_1}[\cdot]$ and $\mathbb{E}_{\rho_1}[\cdot]$.

Two main criteria that summarize the performance of a changepoint detection algorithm are its ability to rapidly detect changes after they occur, and its capacity to control the false detection rate. To formalize these objectives, let the first time the algorithm declares a change be denoted as
\begin{align}
	\label{eq:gen_stop_time}
	T = \min\{t \geq 1 : D^t = 1\}.
\end{align}
For a fixed post-change state $\rho_1$, the responsiveness of the detector with stopping time $T$ is then usually quantified by Lorden's metric, or worst-case expected detection delay \cite{lorden1971procedures}. The detection delay is defined as the difference $T-\nu$ for $T\geq \nu$, or more generally as $[T-\nu]_{+}$ with $[x]_{+}=x$ if $x\geq0$ and $[x]_{+}=0$ otherwise.   Considering the worst case across all possible pre-change states $\rho_0$,  changepoint locations $\nu$, and measurement outcomes $H^\nu$ before the changepoint, we obtain the worst-case expected detection delay
\begin{align}
	\label{eq:worst_worst_case_delay}
	\tau^*(\rho_1)=\hspace{-1em}\sup_{\rho_0\in\mathcal{S}_0,\nu\geq 0}\hspace{-1em}\esssup \mathbb{E}_{\rho_0,\nu,\rho_1}\left[[T-\nu]_{+}|H^\nu\right],
\end{align}
where the essential supremum accounts for the worst-case delay over the pre-change measurement history $H^\nu$, which follows the probability measure induced by the pre-change state $\rho_0$ and the measurement policy.

In contrast, the false-alarm rate requirement is conventionally specified via the worst-case average run length (ARL)
\begin{align}
	\label{eq:ARL}
	 \textrm{ARL}=\sup_{\rho_0\in\mathcal{S}_0}\mathbb{E}_{\rho_0}[T],
\end{align}
which represents the expected number of time steps before a false detection under the least favorable pre-change state. 

The goal is to design the universal measurement strategy $\{\mu^t(\cdot)\}_{t\geq1}$ and the observable-specific detection rule $\{\phi^t(\cdot)\}_{t\geq1}$ to minimize the detection delay $\tau^*(\rho_1)$ while satisfying an ARL requirement
\begin{align}
	  \label{eq:ARL_control}
      \textrm{ARL}\geq\frac{1}{\alpha},
\end{align}
where $\alpha\in (0,1)$ is a user-defined parameter.

\section{Universal Sequential Changepoint Detection of Quantum Observables via Classical Shadows}
\label{sec:escd}
In this section, we introduce eSCD, a changepoint detector for quantum observables that combines randomized measurements from the classical-shadow framework \cite{huang2020predicting} with Shiryaev–Roberts (SR) e-detectors \cite{shin2022detectors}. We first present the observable-agnostic measurement policy based on classical shadows, showing that it yields unbiased estimates of the target expected values of any set of target observables. We then describe the Shiryaev-Roberts (SR) detection rule, and establish non-asymptotic ARL control guarantees together with bounds on the worst-case expected detection delay. An alternative eSCD variant based on CUSUM e-detectors \cite{shin2022detectors} is presented in Appendix \ref{sec:cusum_detector}.

\subsection{Measurement Policy Based on Randomized Measurements}
\label{sec:classical_shadows}

To construct an observable-agnostic measurement strategy $\{\mu^t(\cdot)\}_{t\geq 1}$, we adopt the randomized measurement framework of classical shadows \cite{huang2020predicting}.
As shown in Figure \ref{fig:setting}, at each time $t$, we sample a $2^d\times 2^d$ unitary matrix $U^t \sim \mathcal{U}$ from a uniform distribution $\mathcal{U}$ over a suitable set of unitary matrices, and then apply standard measurements in the computational basis.  
We will specifically focus on the following typical choices for the set of unitary matrices \cite{huang2020predicting}:
\begin{itemize}
	\item \textbf{Local Clifford measurement}: With local Clifford measurements, the unitary matrix $U^t$ separates across qubits as
	$U^t = U_1^t \otimes \cdots \otimes U_d^t$, with each unitary matrix $U_k^t$ for each $k=1,\dots,d$ drawn uniformly from the
	single-qubit Clifford group. Local Clifford measurements can be hence equivalently implemented by choosing each matrix $U_k^t$ independently and uniformly from the set $\{I,H,HS^{\dag}\}$, where $H$ is the matrix of the Hadamard gate and $S$ is the matrix of the phase gate.
	\item \textbf{Joint Clifford measurement}:  With joint Clifford measurements, the unitary matrix
	$U^t$ is drawn uniformly from the full $d$-qubit Clifford group. This can be achieved using, for example, the symplectic matrix construction \cite{koenig2014efficiently}. Note that, unlike the local Clifford ensemble, implementing the unitary $U^t$ requires multi-qubit operations and deeper circuits. 
\end{itemize}

Performing a computational-basis measurement of the rotated state $U^t \rho^t (U^t)^\dagger$ corresponds to implementing the POVM
\begin{align}
	\label{eq:shadow-POVM-general}
	\mathcal{M}^t 
	= \Big\{\Pi_x^t=(U^t)^\dagger \ket{x}\bra{x} U^t \Big\}_{x\in\{0,1\}^d}.
\end{align}
By the Born rule \eqref{eq:measurement_distribution}, this measurement returns as the classical outcome the $d$-bit string
\begin{align}
	\label{eq:shadow_outcome_meas}
	X^t \in \{0,1\}^d\sim\Pr[X^t=x|\mathcal{M}^t,\rho^t]=\Tr(\Pi_x^t\rho^t),
\end{align}
which using \eqref{eq:shadow-POVM-general} can also be expressed as
\begin{align}
	\Tr(\Pi_x^t\rho^t)=\bra{x} U^t\rho^t(U^t)^\dagger \ket{x}.
\end{align}
For a state $\rho$ and randomized measurement $\mathcal{M}$ with distribution $\mathcal{U}$, define the measurement channel
\begin{align}
	\label{eq:measurement_channel}
	\mathcal{R}(\rho)
	&= \mathbb{E}_{U,X}\left[U^\dagger \ket{X}\bra{X} U\right] \nonumber\\
	&= \mathbb{E}_{U}\left[ \sum_{x\in\{0,1\}^d} \bra{x}U\rho U^\dagger\ket{x}\; U^\dagger \ket{x}\bra{x} U \right],
\end{align}
where $\mathbb{E}_{U,X}[\cdot]$ represents the average over the variables $U\sim \mathcal{U}$ and $X\sim \Pr[X=x|\mathcal{M},\rho]$. The map $\mathcal{R}(\rho)$ is linear and invertible for both the local-Clifford and the joint-Clifford ensembles \cite{huang2020predicting}.

For a given $2^d\times 2^d$ unitary $U$ and measurement outcome $X=(X_1,\dots,X_d)$ with $X_k\in\{0,1\}$ for $k=1,\dots,d$ define the snapshot
\begin{align}
    \label{eq:snapshot}
	\sigma=U^\dagger\ket{X}\bra{X}U,
\end{align}
which, in the case of a local Clifford unitary matrix $U=U_1\otimes\dots\otimes U_d$ factorizes as
\begin{align}
	\label{eq:snapshot_single_Cliff}
	\sigma =\bigotimes_{k=1}^d U_k^\dagger\ket{X_k}\bra{X_k}U_k .
\end{align}
Furthermore, with local Clifford measurements, the inverse of the measurement channel
\eqref{eq:measurement_channel} evaluated at the snapshot $\sigma$ in \eqref{eq:snapshot_single_Cliff} has the form \cite{huang2020predicting}
\begin{align}
	\label{eq:Pauli-inverse-map}
	\hat{\rho} = \mathcal{R}^{-1}(\sigma)
	= \bigotimes_{k=1}^d \left( 3 U_k^\dagger\ket{X_k}\bra{X_k}U_k - I \right).
\end{align}
In contrast, with joint Clifford measurements, the inverse channel evaluated at the snapshot $\sigma$ in \eqref{eq:snapshot} is given by \cite{huang2020predicting}
\begin{align}
	\label{eq:joint-inverse-map}
	\hat{\rho} = \mathcal{R}^{-1}(\sigma)
	= (2^d + 1)\sigma - I.
\end{align}
\subsection{Detection: State Estimates based on Classical Shadows}
At each time $t$, the detector receives the transformation $U^t$ and the measurement output $X^t$ in \eqref{eq:shadow_outcome_meas}. Following the classical shadow protocol, it evaluates the snapshot \eqref{eq:snapshot} as
\begin{align}
	\sigma^t = (U^t)^\dagger \ket{X^t}\bra{X^t} U^t.
\end{align}
Then it applies the inverse of the measurement channel \eqref{eq:measurement_channel} to produce the so-called classical shadow estimate
\begin{align}
	\hat\rho^t = \mathcal{R}^{-1}(\sigma^t).
\end{align}
The classical shadow $\hat\rho^t$ is an unbiased estimator of the true state in the sense that, when averaged over the randomized measurement and measurement outcome, the true state $\rho^t$ is recovered as
\begin{align}
    \label{eq:unbiased_shadow}
	\mathbb{E}_{U^t,X^t}[\hat{\rho}]
	&= \mathcal{R}^{-1}\left(\mathbb{E}_{U^t,X^t}[\sigma^t]\right)
	= \mathcal{R}^{-1}\left(\mathcal{R}(\rho^t)\right)
	= \rho^t.
\end{align}
where we have used the definition of measurement channel \eqref{eq:measurement_channel}. The unbiasedness property \eqref{eq:unbiased_shadow} of the classical shadow $\hat{\rho}^t$ implies that
for any observable $O$ the quantity
\begin{align}
		\label{eq:o_hat_i}
	\hat{o}^t = \Tr\left(O\hat{\rho}^{t}\right),
\end{align}
represent an unbiased estimate of the expected value $\langle O_i\rangle_{\rho^t}=\Tr(O\hat{\rho^t})$. This result is formalized in the following proposition. Recall that the operator norm $\lVert O\rVert_{\infty}$ of a positive semidefinite matrix $O$ is defined as the largest eigenvalue of $O$, i.e. $\lVert O\rVert_{\infty}=\lambda_{\rm{max}}(O)$.
\begin{proposition}
	\label{prop:bound_o_hat_general}
	Let $O$ be an observable with finite operator norm, i.e., $\|O\|_\infty < \infty$. Then, the estimator $\hat{o}^t$ in \eqref{eq:o_hat_i} is unbiased, i.e., $\mathbb{E}_{U^t,X^t}[\hat{o}^t] = \langle O \rangle_{\rho^t}$,	and there exist finite constants $l \le u$ such that $\hat{o}^t\in[l, u]$ almost surely.
\end{proposition}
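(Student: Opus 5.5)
Unbiasedness comes from linearity of the trace together with the unbiasedness of the shadow in \eqref{eq:unbiased_shadow}. Since $\hat{o}^t=\Tr(O\hat\rho^t)$ is linear in $\hat\rho^t$ and the outcome space $\{0,1\}^d$ is finite, we can interchange expectation and trace. This gives
\begin{align}
\mathbb{E}_{U^t,X^t}[\hat o^t]=\Tr\big(O\,\mathbb{E}_{U^t,X^t}[\hat\rho^t]\big)=\Tr(O\rho^t)=\langle O\rangle_{\rho^t}.
\end{align}
For the local Clifford ensemble the unitary set is finite. For the joint Clifford ensemble it is the finite Clifford group modulo phases, and phases cancel in $\sigma$. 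So both expectations are finite sums and no integrability issue arises.

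For boundedness, I note that $\hat o^t$ is a deterministic function of $(U^t,X^t)$. This pair takes values in a finite set, since only finitely many distinct snapshots $\sigma$ occur. Hence $\hat o^t$ ranges over finitely many real numbers. They are real because $\hat\rho^t$ and $O$ are Hermitian, so $\Tr(O\hat\rho^t)\in\mathbb{R}$. One could therefore simply take $l$ and $u$ to be the minimum and maximum over this set. To make the constants explicit and useful later, I would bound them via H\"older's inequality, $|\Tr(O\hat\rho)|\le \|O\|_\infty\|\hat\rho\|_1$.

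The explicit bounds depend on the ensemble.

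\emph{Joint Clifford.} By \eqref{eq:joint-inverse-map}, $\hat o^t=(2^d+1)\bra{X}UOU^\dagger\ket{X}-\Tr(O)$. Since $O\succeq 0$, the matrix element lies in $[0,\|O\|_\infty]$. This gives $l=-\Tr(O)$ and $u=(2^d+1)\|O\|_\infty-\Tr(O)$.

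\emph{Local Clifford.} By \eqref{eq:Pauli-inverse-map}, each factor $3\ket{\psi_k}\bra{\psi_k}-I$ has eigenvalues $2$ and $-1$, hence trace norm $3$. Multiplicativity of the trace norm under tensor products then gives $\|\hat\rho^t\|_1=3^d$. Therefore $\hat o^t\in[-3^d\|O\|_\infty,\,3^d\|O\|_\infty]$.

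These bounds are generally loose. Tighter, locality-dependent constants, such as $3^{k}$ for a $k$-local $O$, follow the same way by tracing out the qubits outside the support. The statement only requires existence, so the crude bounds suffice.

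The main obstacle is essentially bookkeeping rather than a real difficulty. It consists of justifying finiteness of the sampling sets, or equivalently the norm bound on $\hat\rho$ for each ensemble, and ensuring the bounds are uniform in $\rho^t$ and $t$. Uniformity holds because the bounds above depend only on $O$ and $d$, not on the state.
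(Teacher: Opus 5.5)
Your proof is correct and follows essentially the same route as the paper: unbiasedness by linearity of the trace together with \eqref{eq:unbiased_shadow}, H\"older's inequality with $\|\hat\rho^t\|_1=3^d$ for local Clifford, and the form $(2^d+1)\sigma-I$ for joint Clifford. The paper restricts to the support $S$ of $O$ and gets $3^{|S|}$ directly, which is the refinement you mention. Your finite-range observation is a clean shortcut for existence alone. One small difference: your joint-Clifford lower constant $l=-\Tr(O)$ relies on $O\succeq 0$, whereas the paper's $l=(2^d+1)\lambda_{\min}(O)-\Tr(O)$ also covers Hermitian, non-PSD observables such as the traceless Pauli strings used in the experiments.
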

\begin{proof}
    See Appendix \ref{app:bound_o_hat_general}.
\end{proof}  

\subsection{Detection: Betting-based Decision Rule}
\label{sec:e-SB-SR-detector}
Given the observables $\{O_i\}_{i=1}^n$ defining the pre- and post-change regimes, at each round $t\ge 1$, the detector produces the set of estimates $\{\hat{o}_i^t\}^n_{i=1}=\{\Tr(O_i\hat\rho_t)\}^n_{i=1}$ in \eqref{eq:o_hat_i}. To turn the sequence of estimates $\{\{\hat{o}_i^t\}_{t\geq1}\}^n_{i=1}$ into a changepoint detection algorithm with ARL control, we rely on the framework of e-detectors introduced in \cite{shin2022detectors}.  

Consider a sequential game in which, at each time step $t$, a bettor invests part of its current capital to bet on the value of the estimate $\hat{o}^t_i$. Let $\lambda^t_i$ denote the variable indicating the size of the bet, with its sign determining whether the bet is placed in favor of or against a positive mean of $\hat{o}^t_i$. The baseline increment process $\{L^t_i\}_{t\geq1}$, given by
\begin{align}
	\label{eq:baseline_increment}
	L_i^t=1+\lambda^t_i\hat{o}^t_i,
\end{align}
represents the one-step capital multiplier of the bettor. The betting parameter $\lambda_i^t$ is an $H^{t-1}$-predictable betting parameter, meaning that it is chosen based only on the history $H^{t-1} = (X^{1:t-1}, \mathcal{M}^{1:t-1})$, and it satisfies $\lambda_i^t\in \Lambda_i = (-u_i^{-1},\, -l_i^{-1})$ where $l_i$ and $u_i$ are the minimum and maximum value of $\hat{o}^t_i$, respectively.

The baseline increments \eqref{eq:baseline_increment} are accumulated over time index $t$ according to the multiplicative rule \cite{shiryaev1961problem} 
\begin{align}
	\label{eq:per_observable_e-detector}
	M^t_{ \rm{SR},i}&=L_i^t\cdot(M^{t-1}_{ \rm{SR},i}+1)\nonumber\\
	&=\sum^t_{j=1}\prod^t_{k=j}L^j_i\nonumber\\
    &=\sum^t_{j=1}E_i^{j:t},
\end{align}
with $M^0_{\rm{SR},i}=0$, and where $E_i^{j:t}$ denotes the cumulative product of the baseline increments process $\{L^t_i\}_{t\geq1}$ from time $j$ to time $t$, with $j\leq t$, i.e.,
\begin{align}
	\label{eq:baseline_partial_prods}
	E_i^{j:t} = \prod_{k=j}^t L_i^k.
\end{align}
The statistic \eqref{eq:per_observable_e-detector} is an instance of the SR e-detector \cite{shiryaev1961problem} introduced in \cite{shin2022detectors}.

Note that the statistic \eqref{eq:baseline_partial_prods} admits a natural betting interpretation as it corresponds to the capital of a bettor who enters the game at time $j$ and follows the betting strategy $\{\lambda_t\}_{t\geq 1}$. This capital is expected to grow large if the mean of the observable $O_i$ after time $j$ is positive. Consequently, the SR statistic \eqref{eq:per_observable_e-detector} aggregates evidence across all possible changepoints by summing the capital processes of bettors entering at different times, and for the same reasons, it is expected to grow large after the unknown changepoint $\nu$.

The per-observable SR e-detector $\{M^t_{\rm{SR},i}\}^n_{i=1}$ are aggregated to obtain an e-detector for the pre-change set $\mathcal{S}_0$ defined based on the set of all observables $\{O_i\}^n_{i=1}$. In particular, for a weight vector $\mathbf{w}=[w_1,\dots,w_n]$ such that $w_i>0$ and $\lVert \mathbf{w}\rVert_1=1$, we define the SR e-detector as the weighted sum
\begin{align}
	\label{eq:e-detector}
	M^t_{\rm{SR}}&=\sum^n_{i=1}w_i	M^t_{\rm{SR},i}=\sum^n_{i=1}w_i	\sum^t_{j=1}E^{j:t}_i.
\end{align}

At each time step $t$, the decision rule 
\begin{align}
	\label{eq:decision_rule}
	D^t=\phi^t(M^t_{\rm{SR}})=\begin{cases}
		1, \quad \text{ if } 	M^t_{\rm{SR}}>1/\alpha,\\
		0, \quad \text{ otherwise.}
	\end{cases}
\end{align}
detects a changepoint if the e-detector $M^t_{\rm{SR}}$ takes a value larger than a threshold, which is set to $1/\alpha$. 

The decision rule \eqref{eq:decision_rule} applied by eSCD induces the stopping time \eqref{eq:gen_stop_time} as
\begin{align}
	\label{eq:stopping_times_aggr_1}
	T_{\rm{eSCD}}&=\inf\{t\geq1: M^t_{\rm{SR}}\geq 1/\alpha\}.
\end{align}

\subsection{Theoretical Guarantees}

In the following, we establish theoretical guarantees for eSCD. First, we prove that it satisfies the ARL control requirement in \eqref{eq:ARL_control} for any predictable choice of betting parameters and mixture weights. Then, we provide efficiency guarantees by deriving bounds on the worst-case expected detection delay \eqref{eq:worst_worst_case_delay} and establishing that betting strategies with sublinear strongly adaptive regret are asymptotically optimal within eSCD's class of changepoint detectors.
\subsubsection{ARL Control}
For any fixed weight vector $\mathbf{w}$ and any predictable sequences of betting parameters $\{\{\lambda_i^t\}_{t\geq1}\}^n_{i=1}$, eSCD satisfies the ARL control requirement \eqref{eq:ARL_control}.
\begin{theorem}[ARL Control]
	\label{th:arl_control}
	For any collection of observables $\{O_i\}_{i=1}^n$, weight vector $\mathbf{w}$, predictable betting parameters $\{\{\lambda_i^t\}_{t\geq1}\}^n_{i=1}$, and threshold 
	$\alpha\in(0,1)$, eSCD satisfies the ARL requirement \eqref{eq:ARL_control}, i.e,
	\begin{align}
		\sup_{\rho_0\in\mathcal{S}_0}\mathbb{E}_{\rho_0}[T_{\rm{eSCD}}]\geq \frac{1}{\alpha}.
	\end{align}
\end{theorem}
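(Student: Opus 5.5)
The plan is to show that, under any fixed pre-change state $\rho_0\in\mathcal{S}_0$ (i.e., $\nu=\infty$), the process $\{M^t_{\rm SR}-t\}_{t\geq 0}$ is a supermartingale with respect to the filtration generated by the histories $\{H^t\}$. Then optional stopping, combined with the threshold crossing at $T_{\rm eSCD}$, gives $\mathbb{E}_{\rho_0}[T_{\rm eSCD}]\geq 1/\alpha$. Since $\rho_0$ is arbitrary, the bound holds for every pre-change state, and in particular for the supremum in \eqref{eq:ARL}. This is the standard e-detector argument of \cite{shin2022detectors}, instantiated with the classical-shadow estimates.

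\textbf{Step 1 (baseline increments are e-values).} Fix $i$ and $t$. Conditionally on $H^{t-1}$, the unitary $U^t$ is drawn afresh from $\mathcal{U}$ and $X^t$ follows the Born rule on $\rho^t=\rho_0$. By Proposition~\ref{prop:bound_o_hat_general}, $\mathbb{E}[\hat o_i^t\mid H^{t-1}]=\langle O_i\rangle_{\rho_0}\le 0$ and $\hat o_i^t\in[l_i,u_i]$ almost surely. Since $\lambda_i^t\in\Lambda_i$, we have $L_i^t=1+\lambda_i^t\hat o_i^t\geq 0$. Here I will be careful at the endpoints: because $\Lambda_i$ is open, $L_i^t>0$ whenever $l_i<0<u_i$, and if $l_i$ or $u_i$ has a different sign I will simply restrict to $\lambda_i^t\geq 0$, or otherwise note that nonnegativity is all that is needed. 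I also need $\mathbb{E}[L_i^t\mid H^{t-1}]=1+\lambda_i^t\langle O_i\rangle_{\rho_0}\le 1$. This requires $\lambda_i^t\geq 0$, because a negative bet against a nonpositive mean could have conditional expectation exceeding $1$. I expect this sign issue to be the main technical point. I will handle it by noting that a bettor testing for a positive mean uses $\lambda_i^t\in[0,-l_i^{-1})$, and by stating the restriction explicitly. If the mean is exactly zero, the sign does not matter.

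\textbf{Step 2 (supermartingale property).} Using the recursion \eqref{eq:per_observable_e-detector} and the predictability of $M^{t-1}_{{\rm SR},i}$, I compute
\[
\mathbb{E}[M^t_{{\rm SR},i}\mid H^{t-1}]=(M^{t-1}_{{\rm SR},i}+1)\,\mathbb{E}[L_i^t\mid H^{t-1}]\le M^{t-1}_{{\rm SR},i}+1 .
\]
Taking the weighted sum with $\sum_i w_i=1$ gives $\mathbb{E}[M^t_{\rm SR}\mid H^{t-1}]\le M^{t-1}_{\rm SR}+1$. Hence $\{M^t_{\rm SR}-t\}_{t\geq0}$ is a supermartingale starting at $M^0_{\rm SR}=0$.

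\textbf{Step 3 (optional stopping).} If $\mathbb{E}_{\rho_0}[T_{\rm eSCD}]=\infty$, there is nothing to prove. Otherwise, I apply optional stopping to the bounded stopping times $T\wedge m$ to get $\mathbb{E}[M^{T\wedge m}_{\rm SR}]\le\mathbb{E}[T\wedge m]\le \mathbb{E}[T]$. I then let $m\to\infty$ and use Fatou's lemma, which applies since $M_{\rm SR}\geq 0$; note also that $T<\infty$ almost surely because its expectation is finite. This yields $\mathbb{E}[M^{T}_{\rm SR}]\le\mathbb{E}[T]$. Finally, $M^T_{\rm SR}\geq 1/\alpha$ by the definition of $T_{\rm eSCD}$, so $\mathbb{E}_{\rho_0}[T_{\rm eSCD}]\geq 1/\alpha$. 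Taking the supremum over $\rho_0$ concludes the proof. Apart from the sign constraint on the bets in Step 1, everything is routine.
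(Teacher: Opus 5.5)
Your proof is correct and is essentially the paper's argument: both derive the e-detector inequality $\mathbb{E}_{\rho_0}[M^{\tau}_{\rm{SR}}]\le\mathbb{E}_{\rho_0}[\tau]$ from the conditional unbiasedness of $\hat o_i^t$ and then use $M^{T}_{\rm{SR}}\ge 1/\alpha$, the only difference being that the paper expands $M^{\tau}_{\rm{SR}}=\sum_i w_i\sum_j \mathds{1}\{j\le\tau\}E_i^{j:\tau}$ and bounds each $\mathbb{E}[E_i^{j:\tau}\mid H^{j-1}]\le 1$, whereas you use the SR recursion to get the supermartingale $M^t_{\rm{SR}}-t$ directly (and your truncation-plus-Fatou step is more careful than the paper's informal iterated conditioning). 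Your restriction $\lambda_i^t\ge 0$ is genuinely needed, not a technicality: the paper's step $\mathbb{E}[1+\lambda_i^\tau\hat o_i^\tau\mid H^{\tau-1}]\le 1$ silently requires it whenever $\langle O_i\rangle_{\rho_0}<0$, even though the paper's $\Lambda_i=(-u_i^{-1},-l_i^{-1})$ admits negative bets, and without it the bound can fail (e.g.\ if some observable has strictly negative mean on all of $\mathcal{S}_0$ and receives a fixed negative bet, the statistic grows exponentially under every pre-change state).
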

The proof of Theorem \ref{th:arl_control} is given in Appendix \ref{proof:arl_control}, and it follows from the unbiasedness of the estimates provided by the randomized measurement protocol established in Proposition \ref{prop:bound_o_hat_general} and from the ARL guarantees of e-detectors \cite{shin2022detectors}. Note that, by virtue of the properties of e-detectors, the ARL control guarantee in Theorem \ref{th:arl_control} can be generalized beyond a fixed pre-change quantum state $\rho_0$. Specifically, in Appendix \ref{proof:arl_control}, we prove the more general result showing that eSCD guarantees ARL control for any sequence of quantum states $\{\rho^t\}_{t\geq 1}$ such that $\rho^t\in \mathcal{S}_0$. 

\subsubsection{Efficiency}
Theorem \ref{th:arl_control} guarantees ARL control for any choice of mixture weights $\mathbf{w}$ and any sequence of betting parameters $\{\{\lambda_i^t\}_{t\geq1}\}_{i=1}^n$; however, the efficiency of the eSCD, as captured by the worst-case average delay $\tau^*(\rho_1)$ in \eqref{eq:worst_worst_case_delay}, depends on these hyperparameters. To quantify this dependence, it is common to study the growth of the test statistic \eqref{eq:per_observable_e-detector} after the changepoint since this dictates how quickly the decision rule \eqref{eq:stopping_times_aggr_1} identifies the change. For reference, following \cite{shin2022detectors}, we first consider the growth achievable by an oracle eSCD detector that knows the post-change state $\rho_1$. For a post-change state $\rho_1\in\mathcal{S}_1$ and observable $O_i$, define the maximal expected log-increment of the baseline process $\{L_i^t\}_{t\geq1}$ as
\begin{align}
	D^*_i(\rho_1)
	=
	\max_{\lambda_i\in\Lambda_i}
	\mathbb{E}_{\rho_1}
	\bigl[\log\bigl(1+\lambda_i \hat{o}_i^1\bigr)\bigr],
\end{align}
and denote by $\lambda_i^*$ a maximizing betting parameter. Among all observables, the one inducing the largest expected log-growth rate is
\begin{align}
	\label{eq:maximum_log_growth}
	D^*(\rho_1)=\max_{i\in[n]}D^*_i(\rho_1),
\end{align}
and we write $i^*$ for an index attaining this maximum. 

If the post-change state $\rho_1$ were known in advance, the hyperparameters that maximize the expected log-growth of the test statistic in \eqref{eq:e-detector} would set $\mathbf{w}$ equal to the unit vector assigning all mass to observable $i^*$, and would use the constant betting strategy $\lambda_i^t=\lambda_i^*$ for all $t$. However, since $\rho_1$ is unknown, a central challenge is to design a sequence of betting parameters and mixture weights that performs nearly as well as this oracle choice for every $\rho_1\in\mathcal{S}_1$. 

In the context of sequential hypothesis testing, numerous works have connected the power of sequential tests to regret guarantees of online learning algorithms, showing that procedures with sublinear static regret give rise to asymptotically optimal testing algorithms in terms of their expected stopping time \cite{10220229,10315047,RAMDAS202283,casgrain2024sequential,waudby2025universal,agrawal2025stopping}. Unlike sequential hypothesis testing, in changepoint detection the time for which one wishes to ensure the growth of the test statistic is unknown, this corresponds to the changepoint $\nu$. Accordingly, in this section, we show that asymptotic optimality can be obtained by betting strategies targeting a dynamic notion of regret, namely, strongly adaptive regret \cite{hazan2007adaptive,hazan2009efficient}.

More precisely, define the maximum log-growth of the cumulative product of the baseline increments process $\{L^t_{i^*}\}_{t\geq1}$ in any interval $[j,t] =\{j,j+1,\dots,t\}$ 
\begin{align}
	\max_{\lambda\in\Lambda_{i^*}}\log(E^{j:t}_{i^*})=\max_{\lambda\in\Lambda_{i^*}}\sum^t_{k=j}	\log\bigl(1+\lambda_i \hat{o}_i^t\bigr).
\end{align}
Note that the index $j$ plays the role of the unknown changepoint time $\nu$. We then define the regret over this time interval, and the property of sublinear strongly adaptive regret of a betting strategy as follows. The regret of a hyperparameter sequence $(\mathbf{w},\{\{\lambda_i^t\}_{t\geq1}\}_{i=1}^n)$ in the interval $[j,t] $ is
\begin{align}
	\label{eq:time_int_regret}
	\mathcal{R}^{j:t}
	=
	\max_{\lambda\in\Lambda_{i^*}}\log E_{i^*}^{j:t}(\lambda)
	-
	\log\left( \sum_{i=1}^n w_i E_i^{j:t} \right),
\end{align}
where $E_{i}^{j:t}(\lambda)$ denotes the cumulative product of the baseline increment process from time $j$ to $t$, evaluated under a constant betting scheme with $\lambda^t_i=\lambda$.
Furthermore, for any horizon $T$, the strongly adaptive regret is defined as the maximum value of \eqref{eq:time_int_regret} over all intervals of length $\Delta_T$,
\begin{align}
	\label{eq:sa_regret}
	\text{SAR}(T,\Delta_T)
	=\max_{[j,j+\Delta_T]\in[1,T]}	\mathcal{R}^{j:j+\Delta_T}.
\end{align}
Different optimization algorithms admit sublinear strongly adaptive regret for convex objectives \cite{daniely2015strongly,wang2018minimizing,jun2017improved,cutkosky2020parameter}. In Section \ref{sec:betting_strategy} we derive a betting strategy for eSCD based on the coin-betting framework \cite{jun2017improved} that enjoys a sublinear bound on the strongly adaptive regret \eqref{eq:sa_regret}.

We now show that eSCD instantiated with a sequence of hyperparameters  $(\mathbf{w},\{\{\lambda_i^t\}_{t\geq1}\}_{i=1}^n)$ that achieves sublinear strongly adaptive regret, enjoys the following asymptotic worst-case expected detection delay guarantee.

\begin{theorem}[Efficiency Guarantee]
\label{th:eff_theorem}
 For a weight vector  $\mathbf{w}$ and predictable sequences of betting parameters $\{\{\lambda_i^t\}_{t\geq1}\}^n_{i=1}$ that enjoy sublinear strongly adaptive regret \eqref{eq:sa_regret}, eSCD satisfies
\begin{align}
	\label{eq:optimal_detection_delay}
	\lim_{\alpha\to0^+}\frac{\tau^*(\rho_1)}{\log(1/\alpha)}=\frac{1}{D^*(\rho_1)}.
\end{align}
\end{theorem}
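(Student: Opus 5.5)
The proof splits into a lower bound $\liminf_{\alpha\to0^+}\tau^*(\rho_1)/\log(1/\alpha)\ge 1/D^*(\rho_1)$ and an upper bound $\limsup_{\alpha\to0^+}\tau^*(\rho_1)/\log(1/\alpha)\le 1/D^*(\rho_1)$.

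\textbf{Lower bound.} It suffices to take $\nu=1$, since $\tau^*(\rho_1)\ge \mathbb{E}_{\rho_1}[T_{\rm eSCD}]-1$. For any predictable bets, the process $E_i^{j:t}\exp(-(t-j+1)D_i^*(\rho_1))$ is a nonnegative supermartingale under $\rho_1$. This holds because, by the definition of $D_i^*$ and concavity, $\mathbb{E}_{\rho_1}[\log(1+\lambda\hat o_i^1)]\le D^*_i$ for every $\lambda\in\Lambda_i$. A convex-duality argument in the style of the numeraire / Kelly optimality strengthens this to $\mathbb{E}_{\rho_1}[(1+\lambda\hat o_i)/(1+\lambda_i^*\hat o_i)]\le1$. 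This yields
\[
\mathbb{E}_{\rho_1}\bigl[\log M^t_{\rm SR}\bigr]\le \log t+tD^*(\rho_1)+O(1).
\]
Here the $\log t$ term comes from summing over entry times $j$ via Jensen, and the weights only help, since $\sum_i w_i\le1$. Applying this at the stopping time via Wald-type optional stopping for the log-wealth, and using $M^{T}_{\rm SR}\ge1/\alpha$ at $T=T_{\rm eSCD}$, gives
\[
\log(1/\alpha)\le D^*\,\mathbb{E}[T]+\mathbb{E}[\log T]+O(1)\le D^*\,\mathbb{E}[T]+\log\mathbb{E}[T]+O(1).
\]
Hence $\mathbb{E}_{\rho_1}[T]\ge(1-o(1))\log(1/\alpha)/D^*$. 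The step that makes this legitimate is that $\hat o_i$ is bounded (Proposition~\ref{prop:bound_o_hat_general}) and $\lambda$ ranges over a set keeping $1+\lambda\hat o_i$ positive. Log-increments can still be unbounded below near the edges of $\Lambda_i$, but this only decreases the wealth, so the upper bound is unaffected.

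\textbf{Upper bound.} Fix $\nu$ and any pre-change history $H^\nu$. For $t\ge\nu$, keep only the bettor entering at $j=\nu$. Since all terms of the SR sum are nonnegative,
\[
\log M^t_{\rm SR}\ge \log\sum_i w_iE_i^{\nu:t}\ge \max_{\lambda}\log E^{\nu:t}_{i^*}(\lambda)-\mathcal R^{\nu:t}\ge \log E^{\nu:t}_{i^*}(\lambda^*_{i^*})-\mathrm{SAR}(t,t-\nu).
\]
Post-change, the observations are i.i.d. under $\rho_1$ irrespective of $H^\nu$, because the shadow measurements are non-adaptive and the states are independent. Therefore $S_m=\log E^{\nu:\nu+m-1}_{i^*}(\lambda^*)$ is a random walk with drift $D^*>0$ whose increments are bounded below (taking $\lambda^*$ in the interior; otherwise slightly shrink it at the cost $\epsilon$) and above. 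The stopping time is dominated by $N=\inf\{m: S_m-\mathrm{SAR}_m\ge\log(1/\alpha)\}$. One point needs care: the regret bound must hold uniformly over the start $\nu$ with $\mathrm{SAR}_m=o(m)$ depending only on the length $m$. This is what "strongly adaptive" provides, and I would interpret the assumption as a bound $\mathrm{SAR}(T,\Delta)\le g(\Delta)$ with $g(\Delta)=o(\Delta)$, possibly with a $\log T$ factor to be absorbed.

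\textbf{Main obstacle.} Bounding $\mathbb{E}[N]$ is the hardest step. Set $m_\alpha=(1+\epsilon)\log(1/\alpha)/(D^*-\epsilon)$. Then
\[
\mathbb{E}[N]\le m_\alpha+\sum_{m\ge m_\alpha}\mathbb{P}\bigl(S_m<\log(1/\alpha)+g(m)\bigr).
\]
For large $m$ we have $g(m)\le \epsilon m/2$, so by Hoeffding on the bounded increments each probability is at most $\exp(-c\epsilon^2 m)$, and the tail sum is $O(1)$ uniformly in $\alpha$. Thus $\tau^*(\rho_1)\le m_\alpha+O(1)$ uniformly over $\rho_0$, $\nu$ and $H^\nu$. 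Dividing by $\log(1/\alpha)$ and letting $\alpha\to0^+$ and then $\epsilon\to0$ completes the upper bound.

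If the regret bound carries a $\log t$ dependence on the horizon, I would handle it by noting that $\log(\nu+m)$ still appears, and $\nu$ is arbitrary. For that reason I would try first to use the per-interval regret bound of the coin-betting construction, whose dependence is on the interval length only.
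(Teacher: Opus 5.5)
\textbf{Upper bound.} This half is essentially the paper's argument. You keep only the bettor entering at $\nu$, bound $\mathbb{E}[N]$ by $m_\alpha$ plus a tail sum, use the interval regret to compare $\sum_i w_iE_i^{\nu:k}$ with the oracle walk $\log E_{i^*}^{\nu:k}(\lambda^*_{i^*})$, and finish by concentration. Your version is slightly cleaner than the paper's in two respects:
\begin{itemize}
\item The paper passes through the surrogate $\tilde E^{\nu:k}$ and controls a two-sided deviation, handling the upper side via the numeraire property, although only the lower side matters.
\item The paper uses Chebyshev--Nemirovski with $\sigma_s(\rho_1)$, where you use Hoeffding after moving $\lambda^*$ into the interior.
\end{itemize}
Your insistence on a regret bound uniform in the start time is well founded. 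The paper's closing claim, that $\sum_{k\ge m+\nu}\mathds{1}\{\cdot\}$ vanishes for large $m$, holds only for fixed $\nu$. But $\tau^*(\rho_1)$ takes a supremum over $\nu$, and with a $\ln k$ factor the number of nonzero terms grows like $\ln\nu$. Note, however, that the CBCE bound itself carries this $\ln t$ factor; it stems from the prior CBCE places on experts according to their start time. So CBCE does not give the length-only dependence you hope to borrow. For the theorem as stated, reading the hypothesis as $\mathrm{SAR}(T,\Delta)\le g(\Delta)=o(\Delta)$ is the right call.

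\textbf{Lower bound.} This is where the genuine gap is. The paper does not prove this half at all: its proof only yields $\limsup_{\alpha\to0^+}\tau^*(\rho_1)/\log(1/\alpha)\le1/D^*(\rho_1)$. So you are right that it is needed, but your first claim is false. The process $E_i^{j:t}e^{-(t-j+1)D_i^*}$ is not a supermartingale, even for the constant oracle bet. Jensen gives $\mathbb{E}_{\rho_1}[1+\lambda_i^*\hat o_i^1]\ge\exp\{\mathbb{E}_{\rho_1}[\log(1+\lambda_i^*\hat o_i^1)]\}=e^{D_i^*}$, strictly unless $\hat o_i^1$ is degenerate. A bound on expected log-increments does not bound expected increments. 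Both your fixed-$t$ bound on $\mathbb{E}[\log M^t_{\rm{SR}}]$ and its ``Wald-type'' transfer to $t=T$ rest on this claim. Without it, nothing lets a fixed-time bound on expected log-wealth pass to the stopping time.

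The numeraire inequality you cite is true, but it is not a strengthening of that claim. It makes the ratio $R_i^{j:t}=E_i^{j:t}/E_i^{j:t}(\lambda_i^*)$ a supermartingale, and the oracle wealth must then be handled separately. One way to repair the argument:
\begin{itemize}
\item Write $M^t_{\rm{SR}}\le\bigl(\max_{i,\,j\le t}E_i^{j:t}(\lambda_i^*)\bigr)Q^t$ with $Q^t=\sum_iw_i\sum_{j\le t}R_i^{j:t}$.
\item Predictability and the numeraire property give $\mathbb{E}[Q^t\mid H^{t-1}]\le Q^{t-1}+1$. Hence $\mathbb{E}[\log Q^T]\le\log\mathbb{E}[T]$, by optional stopping with truncation followed by Jensen.
\item Write each oracle walk as a drift $kD_i^*\le kD^*$ plus a martingale. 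Wald's identity and Doob's maximal inequality then give $\mathbb{E}[\max_{i,\,j\le T}\log E_i^{j:T}(\lambda_i^*)]\le D^*\mathbb{E}[T]+O(\sqrt{\mathbb{E}[T]})$. This uses finite variance of the oracle log-increments, as in the paper's moment argument.
\end{itemize}
Combining these with $M^T_{\rm{SR}}\ge1/\alpha$ gives $\log(1/\alpha)\le D^*\mathbb{E}_{\rho_1}[T]+O(\sqrt{\mathbb{E}_{\rho_1}[T]})+\log\mathbb{E}_{\rho_1}[T]$, which yields the $\liminf$. The $O(\sqrt{\cdot})$ term is real: it appears when two observables attain $D^*$ with imperfectly correlated increments. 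Your $O(1)$ is therefore too optimistic, though harmless asymptotically.
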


\subsection{Coin Betting for Changing Environments Betting Scheme}
\begin{figure}
    \centering
    \includegraphics[width=0.95\linewidth]{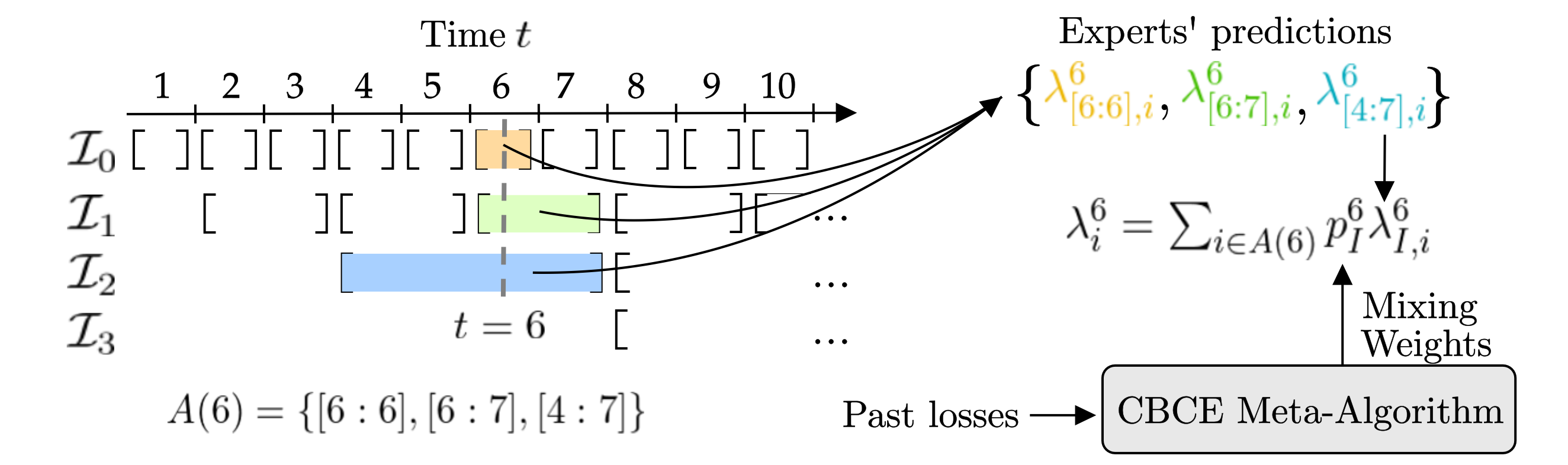}
    \caption{Illustration of the betting scheme for observable $O_i$ based on the Coin Betting for Changing Environments (CBCE) algorithm \cite{jun2017improved}. The time horizon is partitioned into sub-intervals of doubling length, and a universal portfolio algorithm is instantiated within each interval. At each time $t$, the active experts are those whose intervals contain $t$. The CBCE prediction is formed as a convex combination of their outputs, with weights determined by their past losses through the CBCE meta-algorithm \cite[Algorithm 2]{jun2017improved}. }
    \label{fig:cbce}
\end{figure}
\label{sec:betting_strategy}
As discussed, eSCD instantiated with a sequence of hyperparameters that achieves a sublinear strongly adaptive regret, enjoys the asymptotic worst-case expected detection delay in Theorem \ref{th:eff_theorem}. In this subsection, we describe a betting scheme that achieves this guarantee by leveraging the Coin Betting for Changing Environments (CBCE) meta-algorithm \cite{jun2017improved}, instantiated with universal portfolio experts \cite{cover1991universal}. This will allow us to refine the asymptotic result \eqref{eq:optimal_detection_delay} by specializing it to a state-of-the-art betting algorithm with strongly adaptive regret guarantees.

As illustrated in Figure \ref{fig:cbce}, CBCE is a meta-learning algorithm that operates by dividing the optimization horizon into overlapping sub-intervals and by instantiating independent ``experts" across each interval. Each expert corresponds to an online convex optimization instance with sublinear static regret. At each prediction step $t$, CBCE monitors the performance of all active experts and aggregates their predictions to obtain a meta-prediction that enjoys sublinear strongly adaptive regret \cite{jun2017improved}.

In detail, following \cite{daniely2015strongly}, the time axis $t \in \mathbb{N}$ is divided into collections of intervals $\mathcal{I}_k$, with each collection $\mathcal{I}_k$ containing all intervals of size $2^k$ time steps. Specifically, the $k$-th partition encompasses the intervals 
\begin{align}
    \mathcal{I}_k = \big\{[i2^k, (i+1)2^k - 1] : i \in \mathbb{N}\big\}.
\end{align}
At each time $t$, let $A(t)$ denote the set of intervals in the union $\mathcal{I}=\cup^\infty_{k=1}\mathcal{I}_k$ that contain $t$, i.e.,
\begin{align}
	A(t) = \{ I \in \mathcal{I} : t \in I \}.
\end{align}
Note that the set $A(t)$ is finite and contains $|A(t)| = \lfloor \log_2(t) \rfloor + 1$ intervals.

For each observable $i$ and each interval $I = [t_1, t_2] \in \mathcal{I}$, CBCE instantiates a universal portfolio expert \cite{cover1991universal}. For every time $t_1 < t \leq t_2$ in the interval $I$, the universal portfolio expert outputs the betting parameter
\begin{align}
    \label{eq:expert_bet}
	\lambda^t_{I,i}
	= \frac{\displaystyle \int_{\Lambda_i} \lambda E^{t_1:t}_i(\lambda)\mathrm{d}F_i(\lambda)}%
	{\displaystyle \int_{\Lambda_i}E^{t_1:t}_i(\lambda)\mathrm{d}F_i(\lambda)},
\end{align}
where $F_i(\lambda)$ represents the $\mathrm{Beta}(1/2, 1/2)$ distribution rescaled to the interval $\Lambda_i$.

The predictions of the experts in $A(t)$ are combined using weights $\{p_I^t\}_{I \in A(t)}$, with $p_I^t\geq 0$ and $\sum_{I \in A(t)} p_I^t=1$, as
\begin{align}
	\label{eq:cbce_betting_params}
	\lambda^t_{i}
	= \sum_{I \in A(t)} p_I^t\lambda^t_{I,i}.
\end{align}
The weights $\{p_I^t\}_{I \in A(t)}$ are computed using the CBCE meta-algorithm \cite[Algorithm 2]{jun2017improved}.

The combination of hyperparameters $(\mathbf{w},\{\{\lambda_{i}^t\}_{t\geq1}\}_{i=1}^n)$ with bets generated under the CBCE betting strategy can be shown to enjoy a sublinear strongly adaptive regret \eqref{eq:sa_regret}. Furthermore, leveraging this guarantee, the following corollary provides a non-asymptotic bound on eSCD's worst-case detection delay under a CBCE betting strategy that recovers the asymptotic guarantee.

\begin{corollary}[Efficiency Guarantee under CBCE betting]
	\label{prop:sublinear_regret}
	For a non-degenerate weight vector $\mathbf{w}$ such that $w_i > w_{\rm{min}}>0$ for all $i$, and sequence of bets $\{\{\lambda^t_{i}\}_{t\geq1}\}_{i=1}^n$ generated according to \eqref{eq:cbce_betting_params}, there exists a constant $C<\infty$ such that for any $\delta\in(0,1)$ and integer $s>2$, eSCD's worst-case detection delay satisfies 
	\begin{align}
        \label{eq:finite_sample_det_delay}
    	\tau^*(\rho_1)\leq 3&+ \frac{(1+\delta)\log(1/\alpha)}{D^*(\rho_1)}+\frac{2^s \sigma_s(\rho_1)}{(s/2-1)}\left(\frac{1+\delta}{D^*(\rho_1)}+\frac{1}{\log(1/\alpha)}\right)\nonumber\\
    	&+\frac{2(1+\delta)\alpha^{\delta/2}}{\delta D^*(\rho_1)}+\sum^\infty_{k=m+\nu}\mathds{1}\left\{C\sqrt{\frac{(7\ln(k)+5)}{k-\nu}}+\frac{\log(1/w_{\rm{min}})}{k-\nu}\geq \frac{\epsilon}{2}\right\},
    \end{align}
	where $m=\left\lceil(1+\delta)\log(1/\alpha)/D^*(\rho_1)\right\rceil$, $\epsilon=\delta D^*(\rho_1)/(1+\delta)$ and $\sigma_s(\rho_1)$ is the $s$-th central moment of the log-increment under $\rho_1$,
 	\begin{align}
        \label{eq:centr_log_moment}
		\sigma_s(\rho_1)
		=
		\mathbb{E}_{\rho_1}\Bigl[\bigl|\log\bigl(E_{i^*}^{1}(\lambda^*_{i^*})\bigr)-D^*(\rho_1)\bigr|^s\Bigr].
    \end{align}
\end{corollary}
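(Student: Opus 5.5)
The plan is to bound $\tau^*(\rho_1)$ by a sum of tail probabilities and to split that sum into four pieces. Fix $\rho_0\in\mathcal{S}_0$, a changepoint $\nu$ and a pre-change history $H^\nu$. Since $[T-\nu]_+$ is a nonnegative integer, $\mathbb{E}[[T-\nu]_+\mid H^\nu]\le\sum_{k\ge 0}\mathbb{P}[T>\nu+k\mid H^\nu]$. The starting observation is that $M^t_{\rm SR}\ge w_{i^*}E^{\nu:t}_{i^*}\ge w_{\min}E^{\nu:t}_{i^*}$. Hence
\begin{align}
\{T>t\}\subseteq\Bigl\{\log E^{\nu:t}_{i^*}<\log(1/\alpha)+\log(1/w_{\min})\Bigr\}.
\end{align}
Indeed, with the mixture form of \eqref{eq:time_int_regret} one may use $\log\sum_i w_iE_i^{\nu:t}\ge\max_\lambda\log E^{\nu:t}_{i^*}(\lambda)-\mathcal{R}^{\nu:t}\ge\log E^{\nu:t}_{i^*}(\lambda^*_{i^*})-\mathcal{R}^{\nu:t}$. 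After the change the increments are i.i.d. under $\rho_1$, because the shadow measurements use fresh randomness. The constant-bet process $\log E^{\nu:t}_{i^*}(\lambda^*_{i^*})$ is therefore a random walk with drift $D^*(\rho_1)$, and it does not depend on $H^\nu$. This removes the essential supremum.

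Next I would quantify the regret. For the CBCE scheme with universal-portfolio experts, the strongly adaptive regret of \cite{jun2017improved} combined with the $\mathrm{Beta}(1/2,1/2)$ Krichevsky–Trofimov/Cover regret gives the following bound on each interval $[\nu,k]$:
\begin{align}
\mathcal{R}^{\nu:k}\le C\sqrt{(k-\nu)(7\ln k+5)}+\log(1/w_{\min}),
\end{align}
where $\log(1/w_{\min})$ accounts for the mixture over observables. Dividing by $k-\nu$, the per-step regret falls below $\epsilon/2$ except on finitely many $k$. Those exceptional $k\ge m+\nu$ are exactly the terms counted by the indicator sum in \eqref{eq:finite_sample_det_delay}; I bound each of their probabilities by $1$.

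For all other $k\ge \nu+m$, non-detection requires $\log E^{\nu:k}_{i^*}(\lambda^*)<\log(1/\alpha)+(k-\nu)\epsilon/2$. Because $m\ge(1+\delta)\log(1/\alpha)/D^*$ and $\epsilon=\delta D^*/(1+\delta)$, we have $\log(1/\alpha)\le (k-\nu)(D^*-\epsilon)$. The event therefore forces the walk to deviate below its mean by at least about $(k-\nu)\epsilon/2$. I would bound this probability with a Markov inequality on the $s$-th centered moment, using a Rosenthal/Marcinkiewicz–Zygmund-type bound $\mathbb{E}|S_n-nD^*|^s\lesssim 2^s\sigma_s n^{s/2}$. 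The result is a summand of order $2^s\sigma_s(k-\nu)^{-s/2}\epsilon^{-s}$. Summing over $k$ gives a tail sum controlled by $\int x^{-s/2}dx$, which produces the $1/(s/2-1)$ factor; this is why $s>2$ is required. After substituting $m$ and $\epsilon$ and absorbing constants, the stated prefactor $\bigl(\frac{1+\delta}{D^*}+\frac{1}{\log(1/\alpha)}\bigr)$ should appear. For the first $m$ steps I bound the probabilities by $1$, contributing $m\le (1+\delta)\log(1/\alpha)/D^*+1$. Rounding, together with the off-by-one in the sum and the shift at $\nu$, gives the additive $3$.

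The term $2(1+\delta)\alpha^{\delta/2}/(\delta D^*)$ should come from the near-boundary regime. There I would use a geometric (Chernoff/Ville-type) tail for the part of the deviation proportional to $\log(1/\alpha)$: summing $\alpha^{\delta/2}$-scaled terms over a block of length of order $\log(1/\alpha)/(\delta D^*)$ gives that term. I expect the main obstacle to be matching the CBCE regret bound exactly to the form $C\sqrt{(7\ln k+5)/(k-\nu)}$. This requires $\log E^{\nu:k}$ to be evaluated for an interval starting at $\nu$ that is not dyadic, together with the CBCE prior over experts started at times up to $k$, which explains the $\ln k$. I would first try to cite \cite[Theorem 1]{jun2017improved} directly and bound the exp-concave log-loss regret of the experts separately.
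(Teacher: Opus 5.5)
Your route is different from the paper's and more economical, and its main line is sound. The paper does not use the regret identity directly. It works with the surrogate $\tilde E^{k:t}=\sum_i w_i\min\{E_i^{k:t},E_{i^*}^{k:t}\}$, bounds the delay by $\mathbb{E}[N^\nu\mid H^\nu]-\nu$, and enlarges the one-sided event $\{\tilde E^{\nu:k}<1/\alpha\}$ to the two-sided event $\{|\frac{1}{k-\nu}\log\tilde E^{\nu:k}-D^*(\rho_1)|\ge\epsilon\}$. It then splits this into three pieces:
\begin{itemize}
\item the random-walk deviation, handled by the Chebyshev--Nemirovski lemma;
\item the downward gap $\log E^{\nu:k}_{i^*}(\lambda^*_{i^*})-\log\tilde E^{\nu:k}$, which gives the regret indicator;
\item an upward gap $\log\bigl(\tilde E^{\nu:k}/E^{\nu:k}_{i^*}(\lambda^*_{i^*})\bigr)\ge\epsilon(k-\nu)/2$, handled by Markov's inequality and the numeraire property $\mathbb{E}\bigl[\tilde E^{\nu:k}/E^{\nu:k}_{i^*}(\lambda^*_{i^*})\mid H^\nu\bigr]\le 1$.
\end{itemize}
Summing $e^{-\epsilon(k-\nu)/2}$ over $k-\nu\ge m$ gives $1+\frac{2}{\epsilon}e^{-m\epsilon/2}\le 1+\frac{2(1+\delta)\alpha^{\delta/2}}{\delta D^*(\rho_1)}$. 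That is where the $\alpha^{\delta/2}$ term, and one of the additive units, come from.

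You write $\log\sum_i w_iE_i^{\nu:k}=\max_\lambda\log E^{\nu:k}_{i^*}(\lambda)-\mathcal{R}^{\nu:k}$, which shows that only the lower tail of the walk matters, so that third piece never arises. Your ``near-boundary Chernoff/Ville'' explanation of the $\alpha^{\delta/2}$ term is therefore not its origin, and it is unnecessary: a nonnegative term can simply be added to your bound. Your regret certificate $C\sqrt{(k-\nu)(7\ln k+5)}+\log(1/w_{\min})$ and the resulting indicator sum match the paper's.

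The step that does not go through is the claim that, after substituting $m$ and $\epsilon$ and absorbing constants, the stated moment prefactor appears. Your Markov plus Rosenthal/Marcinkiewicz--Zygmund bound, summed over $k-\nu\ge m$, gives a term of order
$\frac{C_s2^s\sigma_s(\rho_1)}{(s/2-1)\,\epsilon^s m^{s/2-1}}\le\frac{C_s2^s\sigma_s(\rho_1)(1+\delta)^{s/2+1}}{(s/2-1)\,\delta^s\bigl(D^*(\rho_1)\bigr)^{s/2+1}\bigl(\log(1/\alpha)\bigr)^{s/2-1}}$.
This decays in $\log(1/\alpha)$ but grows like $\delta^{-s}$. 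It therefore cannot be dominated by $\frac{2^s\sigma_s(\rho_1)}{s/2-1}\bigl(\frac{1+\delta}{D^*(\rho_1)}+\frac{1}{\log(1/\alpha)}\bigr)$ uniformly over $\delta\in(0,1)$. The statement also has no free constant into which $C_s$ could be absorbed.

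The paper's proof has the same defect. It arrives at $\frac{2^s\sigma_s(\rho_1)}{\epsilon^{s/2}m^{s/2-1}(s/2-1)}$ and then asserts an equality with the stated prefactor that does not hold. For example, with $s=4$, $\delta=0.1$, $D^*(\rho_1)=0.1$ and $\log(1/\alpha)=10$, the two factors are $110$ and $11.1$.

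What your argument actually proves is the bound with the moment term left unsimplified, and without the $\alpha^{\delta/2}$ term. That still yields the asymptotic guarantee of Theorem~\ref{th:eff_theorem}, since for $s>2$ the moment term vanishes as $\alpha\to0^+$. You should not claim the displayed prefactor.
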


In Appendix \ref{proof:eff_theorem} we show that the central moment of the log-increment is bounded. Note that for $\alpha \to 0^+$ and any $\delta\in(0,1)$ the upper bound \eqref{eq:finite_sample_det_delay} scales as $(1+\delta)\log(1/\alpha)/D^*(\rho_1)$ which can be made arbitrarily close to $\log(1/\alpha)/D^*(\rho_1)$ taking $\delta \to 0$. This yields the asymptotic guarantee \eqref{eq:optimal_detection_delay}. In this regime, $m \to \infty$ and the dominant term becomes $\frac{(1+\delta)\log(1/\alpha)}{D^*(\rho_1)}$, while all other terms remain constant or vanish. In particular, for sufficiently large $m$, all terms in the series become zero.

\section{Experiments}
\label{sec:experiments}
In this section, we present a series of experiments to evaluate eSCD’s ARL control and detection performance, and to compare eSCD with changepoint detection schemes that use matched measurement policies tailored to the specific changepoint under test. Unless stated otherwise, eSCD is implemented using local Clifford measurements.
\subsection{Single Observables}
\label{sec:exp_single_obs}
We first evaluate eSCD by considering changepoints defined by a single non-trivial, i.e., traceless, Pauli string $O=X^{\otimes d}$.
Specifically, we simulate a sequence of $d$-qubit quantum states $\{\rho^t(\theta)\}_{t\geq 1}$
parameterized by $\theta^t \in [-1,1]$ and defined as
\begin{align}
	\label{eq:exp_qubit}
	\rho(\theta^t) = \frac{I + \theta^t X^{\otimes d}}{2^d},
\end{align}
such that the pre-change set $\mathcal{S}_0$ in \eqref{eq:pre_hypothesis} contains states with $\theta^t \leq 0$, while the post-change set $\mathcal{S}_1$ in \eqref{eq:post_hypothesis} is characterized by $\theta^t > 0$. Note that setting $\theta^t = 0$ yields the maximally mixed state $\rho(0) = I/2^d$, while the choice $\theta^t = \pm 1$ corresponds to states that are maximally mixed within the subspaces associated with the positive ($\theta^t = 1$) or negative ($\theta^t = -1$) eigenvalues of the observable $O$. 

As a benchmark, we consider a measurement strategy tailored to the observable $O$. More specifically, let $\{\ket{v_x}\}_{x=1}^{2^d}$ and $\{o_x\}_{x=1}^{2^d}$ denote the eigenvectors and eigenvalues of the observable $O$, respectively. At each time $t$, the projective measurement $\mathcal{M} = \{\ket{v_x}\bra{v_x}\}_{x=1}^{2^d}$ is applied, and the observed outcome $x^t$ are used to instantiate the e-detector \eqref{eq:per_observable_e-detector} with the baseline increment \eqref{eq:baseline_increment} evaluated at $\hat{o}_i^t=o_{x^t}$. We refer to this scheme as the matched measurement-based changepoint e-detector (eMCD).

In the left panel of Figure \ref{fig:cbce_delay}, we evaluate the performance in terms of ARL by setting the number of qubits $d=2$, the changepoint to $\nu = \infty$ and considering a pre-change state $\rho_0(\theta^t) \in \mathcal{S}_0$ with parameter $\theta^t = \theta_0$. We report the smoothed empirical distribution and mean value of the run length \eqref{eq:stopping_times_aggr_1}, averaged over 100 runs, for a pre-change parameter $\theta_0$ that varies from $0.5$ to $0$. We set the ARL requirement to $1/\alpha = 1000$, and cap the run length to $5000$. For all values of $\theta_0$, thanks to the properties of e-detectors \cite{shin2022detectors}, the ARL requirement \eqref{eq:ARL_control} is satisfied. Moreover, as the parameter $\theta_0$ increases, and thus the quantum state approaches the set of post-change quantum states (with $\theta^t>0$), the ARL decreases. We also observe that, for the same ARL requirement, eMCD yields a larger run length. This quantifies the cost of universality of eSCD with respect to the observable-matched strategy applied by eMCD.

In the right panel of Figure \ref{fig:cbce_delay}, we evaluate the efficiency of the two detection algorithms by considering an unknown but fixed changepoint at $\nu = 200$. Before the changepoint $\nu$, we set the pre-change parameter to $\theta^t = \theta_0 = -0.5$ for $t < \nu$, and after the change, we set the parameter to $\theta^t = \theta_1 > 0$ for $t \ge \nu$. For a fixed ARL requirement of $1/\alpha = 1000$, we report the smoothed empirical distribution of the detection delay averaged over $100$ runs as a function of the post-change parameter $\theta_1$, which ranges from $0.1$ to $1$. As the value of the post-change parameter $\theta_1$ grows, the violation of the observable expectation becomes more severe, and the detection delay decreases. In general, eMCD, being tailored to the observable $O$, delivers smaller detection delays than eSCD, which instead applies universally to any observable. As illustrated in the next experiment, the price for universality paid by eSCD highlighted by this example is quickly amortized as the number of observables $n$ grows.

\begin{figure}[h!]
	\centering
	\begin{subfigure}{0.45\textwidth}
		\centering
		\includegraphics[width=\textwidth]{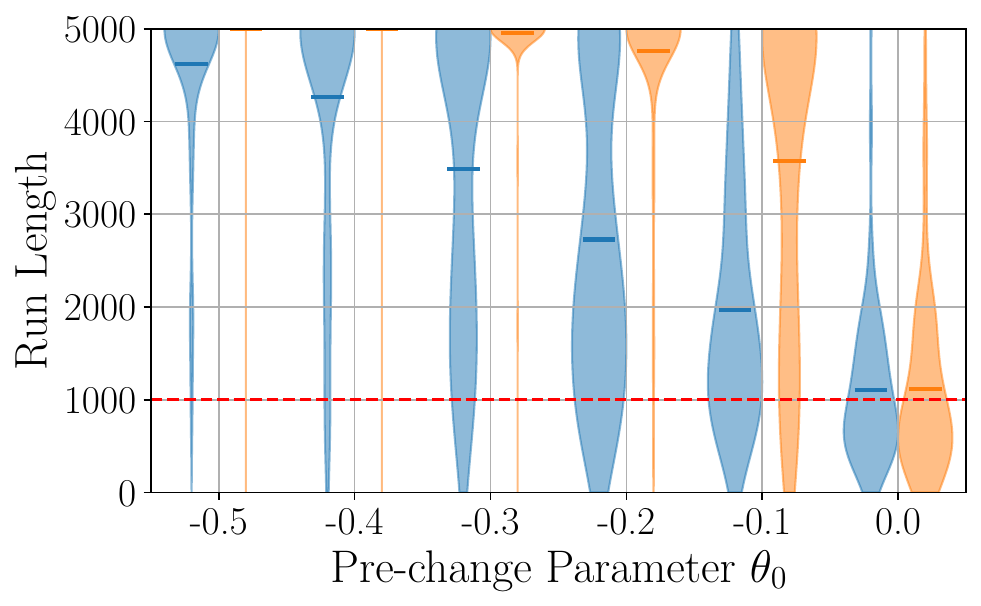}
		\label{fig:cbce_delay_a}
	\end{subfigure}\hfill
	\begin{subfigure}{0.45\textwidth}
		\centering
		\includegraphics[width=\textwidth]{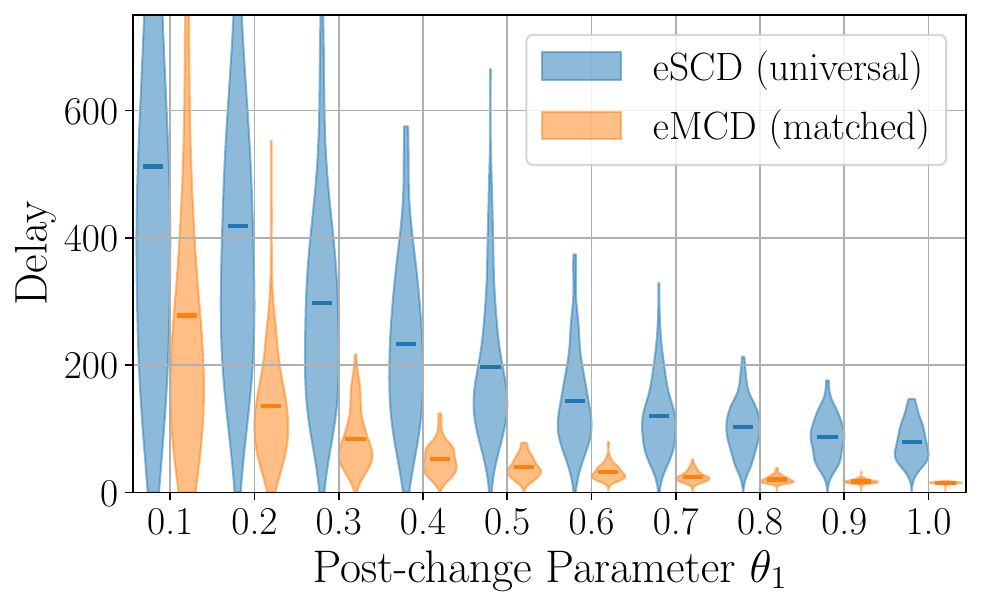}
		\label{fig:cbce_delay_b}
	\end{subfigure}
	\caption{Violin plot showing the distribution of the ARL (left) and the detection delay (right) for eSCD and eMCD as a function of the pre-change parameter $\theta_0$ (left) and as a function of the post-change parameter $\theta_1$ for $\nu=200$ and $\theta_0=-0.5$ (right). The ARL requirement is set to  $1/\alpha = 1000$ (red, dashed line, left), means are shown as straight segments within each violin plot, and quantities are averaged over 100 runs.}
	\label{fig:cbce_delay}
\end{figure}

\subsection{Multiple Observables}

\begin{figure}
	\centering
	\includegraphics[width=0.95\textwidth]{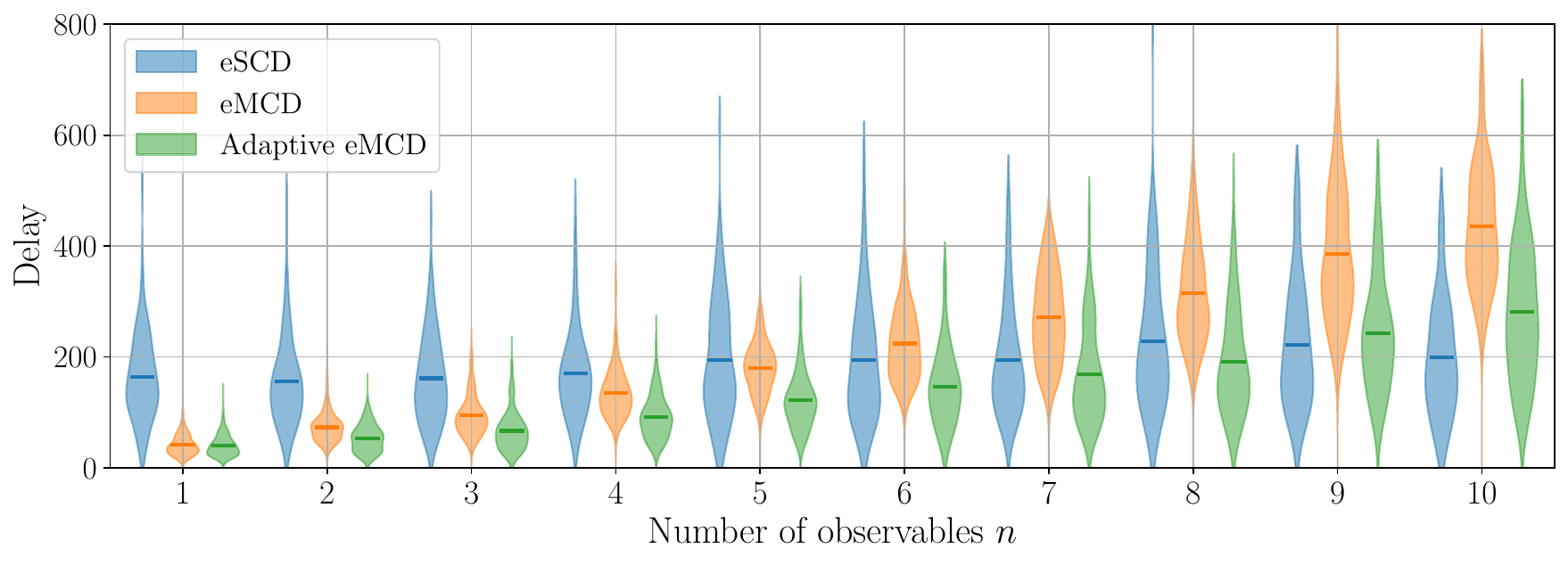}
	\caption{Violin plots for the distribution of the detection delay for eSCD, eMCD, and adaptive eMCD as a function of the number of observable $n$. Quantities are averaged over 100 runs, and all schemes target the ARL constraint $1/\alpha=1000$.}
	\label{fig:multiple_observables}
\end{figure}
 
We now consider changepoints defined by a collection of $n\geq1$ observables $\{O_i\}_{i=1}^n$. For any integer $n \ge 1$, the observables correspond to $n$ rotated variants of the base observable $O = \bigotimes_{j=1}^d X$. Specifically, each observable $O_i$ is obtained by rotating $O$ about the $z$-axis by an angle $\gamma_i = \pi i/(2n)$, i.e.,
\begin{align}
	\label{eq:rotated_observable}
	O_i= (R_z^\dagger(\gamma_i) X R_z(\gamma_i))^{\otimes d},
\end{align}
with Pauli-Z rotation
\begin{align}
	R_z(\gamma)
	= e^{- i \gamma Z / 2}
	= \cos\left(\frac{\gamma}{2}\right) I
	- i \sin\left(\frac{\gamma}{2}\right) Z .
\end{align}

We simulate a sequence of quantum states $\{\rho^t\}_{t\geq 1}$ that evolves as
\begin{align}
	\rho^t=\begin{cases}
		\rho_0=\frac{I -0.5  X^{\otimes d}}{2^d}, \quad \text{for } t<\nu,\\
		\rho_1=\frac{I +0.5  X^{\otimes d}}{2^d}, \quad \text{for } t\geq\nu.
	\end{cases}
\end{align}
The instant $\nu$ is a valid changepoint for the observable \eqref{eq:rotated_observable} since, for all $i\in\{1,\dots,n\}$, we have the inequalities
\begin{align}
	\Tr(\rho_0O_i)=-0.5\cos\left(\frac{\pi i}{2n}\right)<0, \text{ and }\Tr(\rho_1O_i)=0.5\cos\left(\frac{\pi i}{2n}\right)\geq0.
\end{align}

We use this set-up to investigate the potential drawbacks and benefits of eSCD as compared to a generalized eMCD that tailors its measurements to the observables $\{O_i\}_{i=1}^n$.  In particular, at each time $t$, eMCD selects an index $i^t \in \{1,\dots,n\}$, and performs the projective measurement corresponding to the observable $O_i^t$. The measurement outcome $x^t$ is then used to update the corresponding per-observable e-detector $E^t_{i^t}$ in \eqref{eq:per_observable_e-detector} while all other per-observable detectors remain unchanged, so that $E^t_j = E^{t-1}_j$ for all $j \neq i^t$.

We consider two instantiations of eMCD that apply different policies for selecting the index $i^t$ at each time $t$. 

$\bullet$ The first is a uniform policy, where $i^t$ is chosen in a round-robin fashion cycling through $\{1,\dots,n\}$.  This yields an eMCD scheme that uses measurements tailored to the specific observables $\{O_i\}^t_{i=1}$ but is non-adaptive, in the sense that the observable index $i_t$ does not depend on past measurement outcomes.

$\bullet$  We also consider an adaptive measurement policy based on an upper confidence bound (UCB) applied to the baseline increments. Let $N_t(i)$ denote the number of times index $i$ has been selected up to time $t$, and let $\{L^n_i\}_{n=1}^{N_t(i)}$ be the corresponding sequence of baseline increments for that observable that has been obtained up to time $t$. 
At time $t+1$, the index is chosen as
\begin{align}
	i_{t+1}
	= \argmax_j 
	\left(
	\frac{1}{N_t(j)} \sum_{n=1}^{N_t(j)} L_j^n
	\;+\;
	\sqrt{\frac{2\log(1/\delta)}{N_t(j)}}
	\right),
\end{align}
where $\delta \in (0,1)$ is the confidence parameter. We refer to the resulting detector as the adaptive eMCD. Its measurement policy is not only tailored to the observable family defining the changepoint but also adapts to the past measurement outcomes to measure the observable whose index maximizes the UCB on the baseline increment value.

In Figure \ref{fig:multiple_observables} we report the distribution of the detection delay for these schemes. In particular, we fix the changepoint location at $\nu=200$ and vary the number of observables defining the changepoint from $n=1$ to $n=10$. 
For a small number of observable $n$, eMCD schemes benefit from having their measurements tailored to the specific observables, while eSCD is penalized by its universality requirement. For this reason, for a number of observable $n\leq5$, eSCD has the largest detection delay. 

However, the cost of universality is rapidly offset as the number of observables, $n$, increases. 
In fact, as $n$ increases, eMCD schemes must allocate their measurement budget across a larger set of observables. For eMCD, the allocation is non-adaptive and fixed, whereas for adaptive eMCD it is based on past outcomes. As a result, the performance decrease is more pronounced for eMCD; nonetheless, both detectors eventually fall behind eSCD.  Specifically, eSCD becomes more efficient than eMCD for $n>5$, and surpasses adaptive eMCD for $n>9$. As $n$ increases, eSDC maintains nearly constant performance while the performance of both eMCD schemes keeps decreasing.
\subsection{Local vs.\ Joint Clifford Measurements}
\label{sec:local_vs_joint}

\begin{figure}
	\centering
	\includegraphics[width=0.95\textwidth]{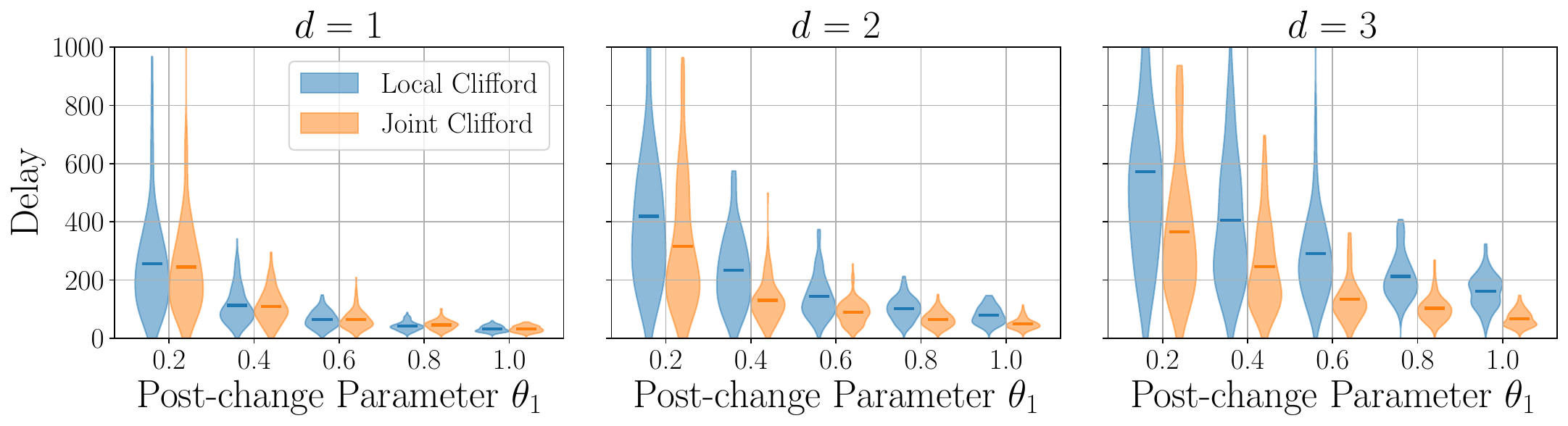}
	\caption{Violin plots for distribution of the detection delay as a function of the post-change parameter $\theta_1$ and of the number of qubits $d$ for eSCD with local Clifford ensemble and joint Clifford ensemble. Quantities are averaged over 100 runs, and all schemes target the ARL constraint $1/\alpha=1000$.}
	\label{fig:joint_vs_clifford}
\end{figure}

We now compare the efficiency of eSCD when implemented using either the local Clifford ensemble or the joint Clifford ensemble introduced in Section \ref{sec:classical_shadows}. From an implementation standpoint, local Clifford measurements are appealing as they require only single-qubit rotations. In contrast, sampling from the full $d$-qubit Clifford group generally requires deeper, entangling circuits. The advantage of joint Clifford measurements, however, is that they provide shadow estimates with smaller variance for a broad class of observables \cite{huang2020predicting}.

To illustrate the potential advantages of global measurements in the context of universal sequential changepoint detection, we revisit the single-observable changepoint experiment of Section \ref{sec:exp_single_obs} and evaluate eSCD under both strategies for randomized measurements. In Figure \ref{fig:joint_vs_clifford} we report the delay distributions for a number of qubits $d \in \{1,2,3\}$ and different values of the post-change parameter $\theta_1$. For $d = 1$, the performance of the two measurement schemes is nearly indistinguishable, and they both achieve comparable detection delays across all values of $\theta_1$. However, as the number of qubits increases, the joint Clifford ensemble exhibits a much more graceful degradation in detection performance, while the local Clifford ensemble suffers increasingly long detection delays. For example, when $d = 3$ and $\theta_1 \ge 0.6$, the average detection delay under local Clifford measurements is roughly twice that obtained under joint Clifford measurements.

\section{Conclusion}
\label{sec:conclusions}
In this work, we  have studied a novel sequential quantum changepoint detection problem, in which the pre- and post-change regimes are specified via constraints on the expectation values of a set of observables that may be unknown by the measurement device. For this setting, we proposed eSCD, a universal sequential changepoint detection algorithm that leverages a randomized measurement protocol and classical shadows \cite{huang2020predicting} for the universal measurement device, while using the framework of e-detectors to address nonparametric changepoint detection \cite{shin2022detectors}. Theoretically, eSCD is shown to enjoy nonasymptotic average run length control and efficiency guarantees on the worst-case detection delay. Experimentally, our results corroborate the theory and illustrate that eSCD can approach the performance of matched, observable-specific measurement strategies while maintaining universality.

Future research directions include the  extension of eSCD to support the localization of changepoints \cite{verzelen2023optimal}, as well as  the application of multi-stream changepoint detection \cite{dandapanthula2025multiple}  to identify which observables have changed, while guaranteeing constraints on false discovery rates.

\section*{Acknowledgments}
The work of O. Simeone was supported by the European Research Council (ERC) under the European Union’s Horizon Europe Programme (grant agreement No. 101198347), by an Open Fellowship of the EPSRC (EP/W024101/1), and by the EPSRC project (EP/X011852/1).

\printbibliography

@article{georgescu2014quantum,
  title={Quantum simulation},
  author={Georgescu, Iulia M and Ashhab, Sahel and Nori, Franco},
  journal={Reviews of Modern Physics},
  volume={86},
  number={1},
  pages={153--185},
  year={2014},
  publisher={APS}
}

@book{simeone2026classical, author    = {Osvaldo Simeone}, title     = {Classical and Quantum Information Theory}, publisher = {Cambridge University Press}, year      = {2026},  address   = {Cambridge, UK}}

@article{simeone2022introduction,
  title={An introduction to quantum machine learning for engineers},
  author={Simeone, Osvaldo},
  journal={Foundations and Trends{\textregistered} in Signal Processing},
  volume={16},
  number={1-2},
  pages={1--223},
  year={2022},
  publisher={Now Publishers, Inc.}
}

@article{arnon2018practical,
  title={Practical device-independent quantum cryptography via entropy accumulation},
  author={Arnon-Friedman, Rotem and Dupuis, Fr{\'e}d{\'e}ric and Fawzi, Omar and Renner, Renato and Vidick, Thomas},
  journal={Nature Communications},
  volume={9},
  number={1},
  pages={459},
  year={2018},
  publisher={Nature Publishing Group UK London}
}

@article{hensen2015loophole,
  title={Loophole-free Bell inequality violation using electron spins separated by 1.3 kilometres},
  author={Hensen, Bas and Bernien, Hannes and Dr{\'e}au, Ana{\"\i}s E and Reiserer, Andreas and Kalb, Norbert and Blok, Machiel S and Ruitenberg, Just and Vermeulen, Raymond FL and Schouten, Raymond N and Abell{\'a}n, Carlos and others},
  journal={Nature},
  volume={526},
  number={7575},
  pages={682--686},
  year={2015},
  publisher={Nature Publishing Group UK London}
}

@article{hayashi2001asymptotics,
	title={Asymptotics of quantum relative entropy from a representation theoretical viewpoint},
	author={Hayashi, Masahito},
	journal={Journal of Physics A: Mathematical and General},
	volume={34},
	number={16},
	pages={3413},
	year={2001},
	publisher={IOP Publishing}
}

@article{zecchin2025quantum,
	title={Quantum sequential universal hypothesis testing},
	author={Zecchin, Matteo and Simeone, Osvaldo and Ramdas, Aaditya},
	journal={arXiv preprint arXiv:2508.21594},
	year={2025}
}

@article{fanizza2023ultimate,
	title={Ultimate limits for quickest quantum change-point detection},
	author={Fanizza, Marco and Hirche, Christoph and Calsamiglia, John},
	journal={Physical review letters},
	volume={131},
	number={2},
	pages={020602},
	year={2023},
	publisher={APS}
}

@article{lorden1971procedures,
	title={Procedures for reacting to a change in distribution},
	author={Lorden, Gary},
	journal={The Annals of Mathematical Statistics},
	pages={1897--1908},
	year={1971},
	publisher={JSTOR}
}

@article{shin2022detectors,
	title={E-detectors: A nonparametric framework for sequential change detection},
	author={Shin, Jaehyeok and Ramdas, Aaditya and Rinaldo, Alessandro},
	journal={New England Journal on Statistics in Data Science},
	year={2023}
}

@article{page1954continuous,
	title={Continuous inspection schemes},
	author={Page, Ewan S},
	journal={Biometrika},
	volume={41},
	number={1/2},
	pages={100--115},
	year={1954},
	publisher={JSTOR}
}

@article{huang2020predicting,
	title={Predicting many properties of a quantum system from very few measurements},
	author={Huang, Hsin-Yuan and Kueng, Richard and Preskill, John},
	journal={Nature Physics},
	volume={16},
	number={10},
	pages={1050--1057},
	year={2020},
	publisher={Nature Publishing Group UK London}
}

@article{waudby2025universal,
	title={Universal log-optimality for general classes of e-processes and sequential hypothesis tests},
	author={Waudby-Smith, Ian and Sandoval, Ricardo and Jordan, Michael I},
	journal={arXiv preprint arXiv:2504.02818},
	year={2025}
}

@article{park2025adaptive,
	title={Adaptive prediction-powered AutoEval with reliability and efficiency guarantees},
	author={Park, Sangwoo and Zecchin, Matteo and Simeone, Osvaldo},
	journal={arXiv preprint arXiv:2505.18659},
	year={2025}
}

@inproceedings{jun2017improved,
	title={Improved strongly adaptive online learning using coin betting},
	author={Jun, Kwang-Sung and Orabona, Francesco and Wright, Stephen and Willett, Rebecca},
	booktitle={Artificial Intelligence and Statistics},
	pages={943--951},
	year={2017},
	organization={PMLR}
}

@article{cover1991universal,
	title={Universal portfolios},
	author={Cover, Thomas M},
	journal={Mathematical Finance},
	volume={1},
	number={1},
	pages={1--29},
	year={1991},
	publisher={Wiley Online Library}
}

@article{akimoto2011discrimination,
	title={Discrimination of the change point in a quantum setting},
	author={Akimoto, Daiki and Hayashi, Masahito},
	journal={Physical Review A—Atomic, Molecular, and Optical Physics},
	volume={83},
	number={5},
	pages={052328},
	year={2011},
	publisher={APS}
}

@ARTICLE{10220229,
	author={Shekhar, Shubhanshu and Ramdas, Aaditya},
	journal={IEEE Transactions on Information Theory}, 
	title={Nonparametric two-sample testing by betting}, 
	year={2024},
	volume={70},
	number={2},
	pages={1178-1203},
	doi={10.1109/TIT.2023.3305867}}

@article{agrawal2025stopping,
		title={On stopping times of power-one sequential tests: tight lower and upper bounds},
		author={Agrawal, Shubhada and Ramdas, Aaditya},
		journal={arXiv preprint arXiv:2504.19952},
		year={2025}
	}

@ARTICLE{10315047,
		author={Orabona, Francesco and Jun, Kwang-Sung},
		journal={IEEE Transactions on Information Theory}, 
		title={Tight concentrations and confidence sequences from the regret of universal portfolio}, 
		year={2024},
		volume={70},
		number={1},
		pages={436-455},
		doi={10.1109/TIT.2023.3330187}}

@article{RAMDAS202283,
			title = {Testing exchangeability: Fork-convexity, supermartingales and e-processes},
			journal = {International Journal of Approximate Reasoning},
			volume = {141},
			pages = {83-109},
			year = {2022},
			note = {Probability and Statistics: Foundations and History. In honor of Glenn Shafer},
			issn = {0888-613X},
			doi = {https://doi.org/10.1016/j.ijar.2021.06.017},
			url = {https://www.sciencedirect.com/science/article/pii/S0888613X21000980},
			author = {Aaditya Ramdas and Johannes Ruf and Martin Larsson and Wouter M. Koolen},
		}

@article{casgrain2024sequential,
			title={Sequential testing for elicitable functionals via supermartingales},
			author={Casgrain, Philippe and Larsson, Martin and Ziegel, Johanna},
			journal={Bernoulli},
			volume={30},
			number={2},
			pages={1347--1374},
			year={2024},
			publisher={Bernoulli Society for Mathematical Statistics and Probability}
		}

@article{sentis2018online,
			title={Online strategies for exactly identifying a quantum change point},
			author={Sent{\'\i}s, Gael and Mart{\'\i}nez-Vargas, Esteban and Munoz-Tapia, Ramon},
			journal={Physical Review A},
			volume={98},
			number={5},
			pages={052305},
			year={2018},
			publisher={APS}
		}

@article{nakahira2025unambiguous,
			title={Unambiguous discrimination of the change point for quantum channels},
			author={Nakahira, Kenji},
			journal={arXiv preprint arXiv:2508.06785},
			year={2025}
		}

@article{john2025fundamental,
			title={Fundamental limits Of quickest change-point detection with continuous-variable quantum states},
			author={John, Tiju Cherian and Gagatsos, Christos N and Bash, Boulat A},
			journal={arXiv preprint arXiv:2504.16259},
			year={2025}
		}

@article{sentis2017exact,
			title={Exact identification of a quantum change point.},
			author={Sent{\'\i}s, G and Calsamiglia, J and Mu{\~n}oz-Tapia, R},
			journal={Physical Review Letters},
			volume={119},
			number={14},
			pages={140506--140506},
			year={2017}
		}

@article{sentis2016quantum,
			title={Quantum change point},
			author={Sentis, G and Bagan, E and Calsamiglia, J and Chiribella, G and Mu{\~n}oz-Tapia, R},
			journal={Physical Review Letters},
			year={2016},
			publisher={American Physical Society. The Journal's web site is located at http://prl~…}
		}

@inproceedings{shiryaev1961problem,
			title={The problem of the most rapid detection of a disturbance in a stationary process},
			author={Shiryaev, Albert N},
			booktitle={Soviet Math. Dokl},
			volume={2},
			number={795-799},
			pages={103},
			year={1961}
		}

@article{pollak1985optimal,
			title={Optimal detection of a change in distribution},
			author={Pollak, Moshe},
			journal={The Annals of Statistics},
			pages={206--227},
			year={1985},
			publisher={JSTOR}
		}

@article{koenig2014efficiently,
			title={How to efficiently select an arbitrary Clifford group element},
			author={Koenig, Robert and Smolin, John A},
			journal={Journal of Mathematical Physics},
			volume={55},
			number={12},
			pages={122202},
			year={2014},
			publisher={AIP Publishing LLC}
		}

@inproceedings{hazan2007adaptive,
			title={Adaptive algorithms for online decision problems},
			author={Hazan, Elad and Seshadhri, Comandur},
			booktitle={Electronic Colloquium on Computational Complexity (ECCC)},
			volume={14},
			number={088},
			year={2007}
		}

@inproceedings{hazan2009efficient,
			title={Efficient learning algorithms for changing environments},
			author={Hazan, Elad and Seshadhri, Comandur},
			booktitle={Proceedings of the 26th Annual International Conference on Machine Learning},
			pages={393--400},
			year={2009}
		}

@inproceedings{daniely2015strongly,
			title={Strongly adaptive online learning},
			author={Daniely, Amit and Gonen, Alon and Shalev-Shwartz, Shai},
			booktitle={International Conference on Machine Learning},
			pages={1405--1411},
			year={2015},
			organization={PMLR}
		}

@inproceedings{wang2018minimizing,
			title={Minimizing adaptive regret with one gradient per iteration.},
			author={Wang, Guanghui and Zhao, Dakuan and Zhang, Lijun},
			booktitle={IJCAI},
			pages={2762--2768},
			year={2018}
		}

@inproceedings{cutkosky2020parameter,
			title={Parameter-free, dynamic, and strongly-adaptive online learning},
			author={Cutkosky, Ashok},
			booktitle={International Conference on Machine Learning},
			pages={2250--2259},
			year={2020},
			organization={PMLR}
		}

@article{terhal2000bell,
  title={Bell inequalities and the separability criterion},
  author={Terhal, Barbara M},
  journal={Physics Letters A},
  volume={271},
  number={5-6},
  pages={319--326},
  year={2000},
  publisher={Elsevier}
}

@article{grootveld2026asymptotically,
  title={Asymptotically optimal quantum universal quickest change detection},
  author={Grootveld, Arick and Yang, Haodong and Sriranga, Nandan and Chen, Biao and Gandikota, Venkata and Pollack, Jason},
  journal={arXiv preprint arXiv:2602.02950},
  year={2026}
}

@article{cumitini2026anytime,
  title={Anytime-valid quantum tomography via confidence sequences},
  author={Cumitini, Aldo and Barletta, Luca and Simeone, Osvaldo},
  journal={arXiv preprint arXiv:2601.20761},
  year={2026}
}

@article{verzelen2023optimal,
  title={Optimal change-point detection and localization},
  author={Verzelen, Nicolas and Fromont, Magalie and Lerasle, Matthieu and Reynaud-Bouret, Patricia},
  journal={The Annals of Statistics},
  volume={51},
  number={4},
  pages={1586--1610},
  year={2023},
  publisher={Institute of Mathematical Statistics}
}

@article{dandapanthula2025multiple,
  title={Multiple testing in multi-stream sequential change detection},
  author={Dandapanthula, Sanjit and Ramdas, Aaditya},
  journal={arXiv preprint arXiv:2501.04130},
  year={2025}
}

@incollection{feynman2018simulating,
  title={Simulating physics with computers},
  author={Feynman, Richard P},
  booktitle={Feynman and Computation},
  pages={133--153},
  year={2018},
  publisher={cRc Press}
}

@article{cerezo2021variational,
  title={Variational quantum algorithms},
  author={Cerezo, Marco and Arrasmith, Andrew and Babbush, Ryan and Benjamin, Simon C and Endo, Suguru and Fujii, Keisuke and McClean, Jarrod R and Mitarai, Kosuke and Yuan, Xiao and Cincio, Lukasz and others},
  journal={Nature Reviews Physics},
  volume={3},
  number={9},
  pages={625--644},
  year={2021},
  publisher={Nature Publishing Group UK London}
}

@book{gottesman1997stabilizer,
  title={Stabilizer codes and quantum error correction},
  author={Gottesman, Daniel},
  year={1997},
  publisher={California Institute of Technology}
}

@article{degen2017quantum,
  title={Quantum sensing},
  author={Degen, Christian L and Reinhard, Friedemann and Cappellaro, Paola},
  journal={Reviews of Modern Physics},
  volume={89},
  number={3},
  pages={035002},
  year={2017},
  publisher={APS}
}

@inproceedings{aaronson2018shadow,
  title={Shadow tomography of quantum states},
  author={Aaronson, Scott},
  booktitle={Proceedings of the 50th Annual ACM SIGACT Symposium on Theory of Computing},
  pages={325--338},
  year={2018}
}

@article{ramdas2025hypothesis,
  title={Hypothesis testing with e-values},
  author={Ramdas, Aaditya and Wang, Ruodu},
  journal={Foundations and Trends in Statistics},
  volume={1},
  number={1-2},
  pages={1--390},
  year={2025},
  publisher={Emerald Publishing Limited}
}

@article{ram2026asymptotically,
  title={Asymptotically optimal sequential change detection for bounded means},
  author={Ram, Ashwin and Ramdas, Aaditya},
  journal={arXiv preprint arXiv:2602.05272},
  year={2026}
}
\appendix
\section{Proof of Proposition \ref{prop:bound_o_hat_general}}
\label{app:bound_o_hat_general}
\begin{proof}[Proof of Proposition \ref{prop:bound_o_hat_general}]
	The unbiasedness of $\hat{o}^t$ is immediate from the unbiasedness of classical shadows,
	\begin{align}
	\mathbb{E}_{U^t,X^t}[\hat{\rho}^t] = \rho^t.
	\end{align}
	It follows that, for any observable $O$, it holds
	\begin{align}
	\mathbb{E}_{U^t,X^t}[\hat{o}^t] = \mathbb{E}_{U^t,X^t}\big[\Tr(O\hat{\rho}^t)\big]
	= \Tr\big(O\mathbb{E}_{U^t,X^t}[\hat{\rho}^t]\big)
	= \Tr(O\rho)
	= \langle O\rangle_\rho.
	\end{align}
	
	We now proceed to show the almost-sure boundedness of $\hat{o}^t$ for Local Clifford and Joint Clifford ensembles separately. 
	
	\emph{Local Clifford ensemble:}  Let $S\subseteq\{1,\dots,d\}$ denote the subset of qubits on which $O$ acts non-trivially.
	Since $O$ acts trivially outside $S$, writing $O = O_{S} \otimes I_{S^c}$, it follows
	\begin{align}
		\hat{o}^t = \Tr(O\hat\rho^t) = \Tr\big( O_{i,{S}}  \hat\rho^t_{S}\big),
	\end{align}
	where $\hat\rho^t_{S}$ is the snapshot $\hat\rho^t$ restricted on qubits in $S$,  i.e.,
	\begin{align}
		\hat\rho^t_{S} = \bigotimes_{k\in S} 3 (U^t)^{\dag}_k\ket{X^t_k}\bra{X^t_k}U^t_k - I
	\end{align}
	Each factor,
	\begin{align}
		3 (U^t)^{\dag}_k\ket{X^t_k}\bra{X^t_k}U^t_k - I
	\end{align}
	is Hermitian with eigenvalues $2$ and $-1$, hence its trace norm is $3$. It follows that
	\begin{align}
		\|\hat\rho_{S}\|_1
		= \prod_{k\in S} \|3 (U^t)^{\dag}_k\ket{X^t_k}\bra{X^t_k}U^t_k - I\|_1
		= 3^{|S|}.
	\end{align}
	From Hölder's inequality, it follows  
	\begin{equation}
		\label{eq:trace-bound-general}
			\bigl|	\hat{o}^t \bigr|=\bigl| \Tr(O_{i,{S}}  \hat\rho^t_{S})\bigr|
		\le \|O_{i,{S}}\|_\infty \| \hat\rho^t_{S}\|_1\leq 3^{|S|} \|O\|_\infty.
	\end{equation}
	We conclude that for any observable $O$ acting non-trivially on $|S|$ qubits and with $\|O\|_\infty<\infty$, $\hat{o}^t$ admits finite upper and lower bound $l=-3^{|S|} \|O\|_\infty$ and $u=3^{|S|} \|O\|_\infty$. Note that for a small number of qubits $d$ one can obtain tight upper and lower bounds by evaluating the inverse channel map \eqref{eq:Pauli-inverse-map} for all combinations $U$ and $X$.

	\emph{Joint Clifford ensemble:}
	If $U$ is drawn from the full $d$-qubit Clifford group,
	then every snapshot has the form
	\begin{align}
	\hat{\rho}=(2^{d}+1)\ket{\psi}\bra{\psi}-I,
	\end{align}
	with some pure state $\ket\psi\in\mathbb C^{2^d}$. Consequently, writing
	$\lambda_{\max}(O)$ and $\lambda_{\min}(O)$ for the largest and smallest eigenvalues of $O$,
	the estimator $\hat{o}^t=\Tr(O\hat\rho^t)$ satisfies the bounds
	\begin{align}
	l=(2^{d}+1)\lambda_{\min}(O)-\Tr(O)
	\le
	\hat{o}^t
	\le
	(2^{d}+1)\lambda_{\max}(O)-\Tr(O)=u
	\end{align}
	where both $l$ and $u$ are bounded having assumed $\lVert O\rVert_\infty<\infty$.
\end{proof}

\section{eSCD based on CUSUM e-detectors}
\label{sec:cusum_detector}
As an alternative to the eSCD based on the SR construction presented in Section \ref{sec:e-SB-SR-detector}, it is possible to obtain an eSCD based on CUSUM e-detectors (eSCD-CU) by aggregating the baseline increment process \eqref{eq:baseline_increment} using the CUSUM rule \cite{page1954continuous}. Specifically, the corresponding per-observable e-detector is defined as
\begin{align}
	\label{eq:e-detectors_CUSUM}
	M^t_{\rm{CU},i}&=L_i^t\cdot\max\left\{M^{t-1}_{\rm{CU},i},1\right\}=\max_{j\in[t]}\prod^t_{k=j}L^j_i=\max_{j\in[t]}E^{j:t}.
\end{align}

Given the per-observable e-detectors $\{M^t_{\rm{CU},i}\}^n_{i=1}$ the aggregate CUSUM e-detector for the set $\mathcal{S}_0$ is given by
\begin{align}
	\label{eq:mixture_e-detector_CUSUM}
	M^t_{\rm{CU}}&=\sum^n_{i=1}w_i	M^t_{\rm{CU},i}=\sum^n_{i=1}w_i	\max_{j\in[t]}E^{j:t}_i.
\end{align}
Thresholding the value of the CUSUM e-detector,  the eSCD-CU decision rule is
\begin{align}
	\phi^t(M^t_{\rm{CU}})=D^t=\begin{cases}
		1, \quad \text{ if } 	M^t_{\rm{CU}}>c_\alpha,\\
		0, \quad \text{ otherwise.}
	\end{cases}
\end{align}
This decision rule induces the stopping times
\begin{align}
	\label{eq:stopping_times_aggr_2}
	T_{\rm{eSCD-CU}}&=\inf\{t\geq1: M^t_{\rm{CU}}\geq  c_\alpha\}.
\end{align}
The threshold can $c_\alpha$ be conservatively set to $1/\alpha$. The eSCD-CU enjoys the same ARL and efficiency guarantees as those stated in Theorems \ref{th:arl_control} and \ref{th:eff_theorem} for the e-SB-SR detector; however, it is empirically less efficient than its SR counterpart in our experiments. Nonetheless, CUSUM detectors are known to possess strong minimax-type optimality guarantees for quickest change detection \cite{pollak1985optimal}.
\section{Proof of Theorem \ref{th:arl_control}}
\label{proof:arl_control}
We prove a generalized version Theorem \ref{th:arl_control} by lower bounding  eSCD's $T$ under any sequence of state $\{\rho^t_0\}_{t\geq1}$ such that $\rho^t_0\in\mathcal{S}_0$ for all $t\geq1$. The statement of Theorem \ref{th:arl_control} setting $\rho^t_0=\rho_0$. We define the expectation operator with respect to the sequence of quantum states $\{\rho^t_0\}_{t\geq1}$ as $\mathbb{E}_{\{\rho^t_0\}_{t\geq1}}\left[\cdot\right]$.

The proof is based on the property of e-detectors and the unbiasedness of the estimates $\{\{\hat{o}_i^t\}_{t\geq 1}\}^n_{i=1}$. A process $M^t$ is a valid e-detector; i.e., for any stopping time $\tau$ and sequence $\{\rho^t_0\}_{t\geq1}$ it holds
\begin{align}
    \mathbb{E}_{\{\rho^t_0\}_{t\geq1}}\left[M^\tau\right]\leq  \mathbb{E}_{\{\rho^t_0\}_{t\geq1}}\left[\tau\right].
\end{align}
Both the SR e-detector  $M_{\rm{SR}}^t$ in \eqref{eq:e-detector} and the CUSUM e-detector $M^t_{\rm{CU}}$ \eqref{eq:mixture_e-detector_CUSUM} are valid e-detectors, in fact \cite{shin2022detectors}
\begin{align}
    \mathbb{E}_{\{\rho^t_0\}_{t\geq1}}\left[M_{\rm{CU}}^\tau\right]\leq\mathbb{E}_{\{\rho^t_0\}_{t\geq1}}\left[M_{\rm{SR}}^\tau\right]&=\sum^n_{i=1}w_i\mathbb{E}_{\{\rho^t_0\}_{t\geq1}}\left[\sum^\infty_{j=1}\mathds{1}\left\{j\leq\tau\right\} E_i^{j:\tau}\right]\\
    &=\sum^n_{i=1}w_i\sum^\infty_{j=1}\mathbb{E}_{\{\rho^t_0\}^{j-1}_{t=1}}\left[\mathds{1}\left\{j\leq\tau\right\}\mathbb{E}_{\{\rho^t_0\}_{t\geq j}}\left[E_i^{j:\tau}\big|H^{j-1}\right]\right]\\
    &\leq\sum^n_{i=1}w_i\sum^\infty_{j=1}\mathbb{E}_{\{\rho^t_0\}^{j-1}_{t=1}}\left[\mathds{1}\left\{j\leq\tau\right\}\right]=\mathbb{E}_{\{\rho^t_0\}_{t\geq1}}\left[\tau\right].
    \label{eq:e-detector-property}
\end{align}
where the first equality follows from the non-negativity of the e-detector, the second from the law of total expectation, while the last inequality follows from the definition of baseline increments, predictability of the betting strategy, and the unbiasedness of the estimates$\{\{\hat{o}_i^t\}_{t\geq 1}\}^n_{i=1}$.
\begin{align}
    \mathbb{E}_{\{\rho^t_0\}^{\tau}_{t=j}}\left[E_i^{j:\tau}\big|H^{j-1}\right]&=\mathbb{E}_{\{\rho^t_0\}^{\tau}_{t=j}}\left[\prod^\tau_{t=j} L^t_i\Bigg|H^{j-1}\right]\\
    &=\mathbb{E}_{\{\rho^t_0\}^{\tau-1}_{t=j}}\left[\prod^{\tau-1}_{t=j} L^t_i\mathbb{E}[(1+\lambda^\tau_i \hat{o}^\tau_i)|H^{\tau-1}]\Bigg|H^{j-1}\right]\\
    &\leq\mathbb{E}_{\{\rho^t_0\}^{\tau-1}_{t=j}}\left[\prod^{\tau-1}_{t=j} L^t_i\Bigg|H^{j-1}\right]\leq...\leq1.
\end{align}
It then follows that for the stopping time \eqref{eq:stopping_times_aggr_1} and \eqref{eq:stopping_times_aggr_2} (with $c_\alpha=1/\alpha$), the e-detectors satisfy $M_{\rm{SR}}^T\geq 1/\alpha$ and $M^T_{\rm{CU}}\geq 1/\alpha$. This fact, combined with inequality \eqref{eq:e-detector-property}, implies that when the stopping time $T$ is finite, 
\begin{align}
     \mathbb{E}_{\{\rho^t_0\}_{t\geq1}}\left[T_{\rm{eSCD}}\right]\geq\mathbb{E}_{\{\rho^t_0\}_{t\geq1}}\left[M_{\rm{SR}}^T\right]\geq 1/\alpha,\\
     \mathbb{E}_{\{\rho^t_0\}_{t\geq1}}\left[T_{\rm{eSCD-CU}}\right]\geq\mathbb{E}_{\{\rho^t_0\}_{t\geq1}}\left[M_{\rm{CU}}^T\right]\geq 1/\alpha,
\end{align}
for any sequence $\{\rho^t_0\}_{t\geq1}$.
\section{Proof of Theorem \ref{th:eff_theorem}}
\label{proof:eff_theorem}

Following the proof strategy in \cite{park2025adaptive}, we base the analysis of the expected detection delay on the quantity
\begin{align}
	\label{eq:surrogate_prod_incr}
	\tilde{E}^{k:t} = \sum_{i=1}^n w_i \min\{E^{k:t}_i, E^{k:t}_{i^*}\},
\end{align}
which serves as a lower bound to the mixture of baseline increments' partial products
\begin{align}
	\tilde{E}^{k:t} = \sum_{i=1}^n w_i \min\{E^{k:t}_i, E^{k:t}_{i^*}\} \leq \sum_{i=1}^n w_i E^{k:t}_i.
\end{align}
Lower bounds on the stopping time of eSCD and eSCD-CU can be obtained by studying the quantity \eqref{eq:surrogate_prod_incr} since for eSCD's stopping time, it holds
\begin{align}
	T_{\rm{eSCD}}&=\inf\left\{t\geq 1:\sum^t_{j=1}\sum^m_{i=1}w_i	E^{j:t}_i\geq 1/\alpha\right\}\nonumber\\
	&\leq\inf\left\{t\geq 1:\sum^t_{j=1}\tilde{E}^{j:t}\geq 1/\alpha\right\}\nonumber\\
	&\leq \min_{j\geq 1}\inf\left\{t\geq j:\tilde{E}^{j:t}\geq 1/\alpha\right\}:=\min_{j\geq1}N^j,
\end{align}
while for eSCD-CU's stopping time, it holds
\begin{align}
	T_{\rm{eSCD-CU}}&=\inf\left\{t\geq 1:\sum^{m}_{i=1}w_i\max_{j\in[t]}E_i^{j:t}\geq c_\alpha\right\}\nonumber\\
	&\leq\inf\left\{t\geq 1:\max_{j\in[t]}\sum^{m}_{i=1}w_iE_i^{j:t}\geq c_\alpha\right\}\nonumber\\
	&\leq\inf\left\{t\geq 1:\max_{j\in[t]}\tilde{E}^{j:t}\geq c_\alpha\right\}\nonumber\\
	&\leq \min_{j\geq 1}\inf\left\{t\geq j:\tilde{E}^{j:t}\geq c_\alpha\right\}:=\min_{j\geq1}N^j.
\end{align}
In the following, we focus on eSCD, since the proof for eSCD-CU follows by replacing $1/\alpha$ with $c_\alpha$. The following proof technique is inspired by results from \cite{shin2022detectors,waudby2025universal,park2025adaptive}.

Since we are interested in upper bounding the worst-case delay $\tau^*(\rho_1)$ defined in \eqref{eq:worst_worst_case_delay}, in the following we condition on the pre-change quantum state pair $\rho_0\in\mathcal{S}_0$,  changepoint location $\nu\geq 1$ and sequence of measurements outcomes and POVMs $H^\nu$ up to time $\nu$. For any such conditioning, the average detection delay of eSCD can be bounded as
\begin{align}
	\mathbb{E}_{\rho_0,\nu,\rho_1}\left[[T_{\rm eSCD}-\nu]_+\big| H^\nu\right]&\leq\mathbb{E}_{\rho_0,\nu,\rho_1}\left[\left[ \min_{j\geq 1}N^{j}-\nu\right]_+\big| H^\nu\right]\nonumber\\
	&\leq \min_{j\geq \nu}\mathbb{E}_{\rho_0,\nu,\rho_1}\left[N^{j}\big| H^\nu\right]-\nu\nonumber\\
	&\leq \mathbb{E}_{\rho_0,\nu,\rho_1}\left[N^{\nu}\big| H^\nu\right]-\nu
\end{align}

As in \cite{waudby2025universal}, for a given $\delta\in(0,1)$ we define the two following auxiliary quantities
\begin{align}
	\epsilon:=\frac{\delta}{1+\delta}D^*(\rho_1), \qquad m:=\left\lceil\frac{\log(1/\alpha)}{D^*(\rho_1)-\epsilon}\right\rceil=\left\lceil\frac{(1+\delta)\log(1/\alpha)}{D^*(\rho_1)}\right\rceil
\end{align}
Denoting $m'=m+\nu$ it follows that
\begin{align}
	\mathbb{E}_{\rho_0,\nu,\rho_1}\left[N^{\nu}\big| H^\nu\right]&=\mathbb{E}_{\rho_0,\nu,\rho_1}\left[N^{\nu}\mathds{1}\{N^{\nu}\leq m'\}\big| H^\nu\right]+\mathbb{E}_{\rho_0,\nu,\rho_1}\left[N^{\nu}\mathds{1}\{N^{\nu}>m'\}\big| H^\nu\right]\nonumber\\
	&\leq m'\mathbb{P}_{\rho_0,\nu,\rho_1}\left[N^{\nu}\leq m'\big| H^\nu\right]+\sum^\infty_{k=m'+1}k\mathbb{P}_{\rho_0,\nu,\rho_1}\left[N^{\nu}=k\big| H^\nu\right]\nonumber\\
	&\leq m'\mathbb{P}_{\rho_0,\nu,\rho_1}\left[N^{\nu}\leq m'\big| H^\nu\right]+\sum^\infty_{k=1}(k+m')\mathbb{P}_{\rho_0,\nu,\rho_1}\left[N^{\nu}=k+m'\big| H^\nu\right]\nonumber\\
	&\leq m'+\sum^\infty_{k=1}k\mathbb{P}_{\rho_0,\nu,\rho_1}\left[N^{\nu}=k+m'\big| H^\nu\right]\nonumber\\
	&\leq m'+\sum^\infty_{k=1}\mathbb{P}_{\rho_0,\nu,\rho_1}\left[N^{\nu}\geq k+m'\big| H^\nu\right]\nonumber\\
	&\leq m'+\underbrace{\sum^\infty_{k=m'}\mathbb{P}_{\rho_0,\nu,\rho_1}\left[\tilde{E}^{\nu:k}< 1/\alpha \Big| H^\nu\right]}_{(*)}.
\end{align}
Note that
\begin{align}
	\tilde{E}^{\nu:k}< 1/\alpha  \Longleftrightarrow \frac{\log \tilde{E}^{\nu:k}}{k-\nu}<\frac{\log(1/\alpha)}{k-\nu},
\end{align}
and because $m(D^*(\rho_1)-\epsilon)\geq \log(1/\alpha)$ and $k\geq m'=m+\nu$ 
\begin{align}
	\frac{\log \tilde{E}^{j:k}}{k-\nu}<\frac{\log(1/\alpha)}{k-\nu}\implies \frac{\log \tilde{E}^{\nu:k}}{k-\nu}<\frac{m(D^*(\rho_1)-\epsilon)}{k-\nu}\implies \frac{\log \tilde{E}^{\nu:k}}{k-\nu}<D^*(\rho_1)-\epsilon.
\end{align}
Based on the above, the term $(*)$ can be upper bounded as
\begin{align}
	(*)=&\sum^\infty_{k=m+\nu}\mathbb{P}_{\rho_0,\nu,\rho_1}\left[\tilde{E}^{\nu:k}< 1/\alpha \Big| H^\nu\right]\nonumber\\
	\leq& \sum^\infty_{k=m+\nu}\mathbb{P}_{\rho_0,\nu,\rho_1}\left[\left|\frac{1}{k-\nu}\log \tilde{E}^{\nu:k}-D^*(\rho_1)\right|\geq \epsilon \ \Bigg| H^\nu\right]\nonumber\\
	\leq& \sum^\infty_{k=m+\nu}\underbrace{\mathbb{P}_{\rho_0,\nu,\rho_1}\left[\left|\frac{1}{k-\nu}\left(\log \tilde{E}^{\nu:k}-\log E_{i^*}^{\nu:k}(\lambda^*_{i^*})\right)\right|\geq \frac{\epsilon}{2}\  \Bigg| H^\nu\right]}_{(\diamond)}\nonumber\\
	&+\underbrace{\sum^\infty_{k=m+\nu}\mathbb{P}_{\rho_0,\nu,\rho_1}\left[\left|\frac{1}{k-\nu}\log E_{i^*}^{\nu:k}(\lambda^*_{i^*})-D^*(\rho_1)\right|\geq \frac{\epsilon}{2} \ \Bigg| H^\nu\right]}_{(\ddag)}
	\label{eq:intermediate_1} 
\end{align}
For the  term $(\diamond)$ it holds
\begin{align}
	\diamond\leq& \mathbb{P}_{\rho_0,\nu,\rho_1}\left[\log\left(\frac{\tilde{E}^{\nu:k}}{E_{i^*}^{\nu:k}(\lambda^*_{i^*})}\right)\geq \frac{\epsilon(k-\nu)}{2}\ \Bigg| H^\nu\right]\nonumber\\
	&+\mathbb{P}_{\rho_0,\nu,\rho_1}\left[\frac{1}{k-\nu}\left(\log E_{i^*}^{\nu:k}(\lambda^*_{i^*})-\log \tilde{E}^{\nu:k}\right)\geq \frac{\epsilon}{2}\ \Bigg| H^\nu\right],
\end{align}
where the first term can be bounded as
\begin{align}
	\mathbb{P}_{\rho_0,\nu,\rho_1}\left[\log\left(\frac{\tilde{E}^{\nu:k}}{E_{i^*}^{\nu:k}(\lambda^*_{i^*})}\right)\geq \frac{\epsilon(k-\nu)}{2}\ \Bigg| H^\nu\right]&\leq \mathbb{E}_{\rho_0,\nu,\rho_1}\left[\frac{\tilde{E}^{\nu:k}}{E_{i^*}^{\nu:k}(\lambda^*_{i^*})} \Bigg| H^\nu\right]\exp\left\{-\frac{\epsilon(k-\nu)}{2}\right\}\nonumber\\
	&\leq \exp\left\{-\frac{\epsilon(k-\nu)}{2}\right\},
\end{align}
where the last inequality follows from \cite[Lemma 3]{park2025adaptive} given the numerarie property
\begin{align}
	\mathbb{E}_{\rho_0,\nu,\rho_1}\left[\frac{\tilde{E}^{\nu:k}}{E_{i^*}^{\nu:k}(\lambda^*_{i^*})} \Bigg| H^\nu\right]&=\mathbb{E}_{\rho_0,\nu,\rho_1}\left[\frac{\sum^n_{i=1}w_{i}\min\{E^{\nu:t}_i,E^{\nu:t}_{i^*}\}}{E_{i^*}^{\nu:k}(\lambda^*_{i^*})} \Bigg| H^\nu\right]\nonumber\\
	&\leq\mathbb{E}_{\rho_0,\nu,\rho_1}\left[\frac{E^{\nu:t}_{i^*}}{E_{i^*}^{\nu:k}(\lambda^*_{i^*})} \Bigg| H^\nu\right]\leq 1
\end{align}
The second  can be related to the strongly adaptive regret of the sequence of hyperparameters and bounded using the regret guarantee \eqref{eq:time_int_regret}
\begin{align}
	\Pr\left[\frac{\log E_{i^*}^{\nu:k}(\lambda^*_{i^*})-\log \tilde{E}^{\nu:k}}{k-\nu}\geq \frac{\epsilon}{2}\ \Bigg| H^\nu\right]&=\Pr\left[\mathcal{R}^{\nu:k}(\lambda_{i^*}^{\nu:k})-\mathcal{R}^{\nu:k}(\mathbf{w})\geq \frac{\epsilon(k-\nu)}{2}\ \Bigg| H^\nu\right]\nonumber\\
	&\leq \mathds{1}\left\{\mathcal{R}^{\nu:k}\geq \frac{\epsilon(k-\nu)}{2}\right\}.
\end{align}
It follows that the first sum in \eqref{eq:intermediate_1} can be bounded as
\begin{align}
	\sum^{\infty}_{k=m+\nu}\exp\left\{-\frac{\epsilon(k-\nu)}{2}\right\}&+\mathds{1}\left\{\mathcal{R}^{\nu:k}\geq \frac{\epsilon(k-\nu)}{2}\right\}\leq\nonumber\\
	&\leq 1+\frac{2}{\epsilon}\exp\{-m\epsilon/2\}+\mathds{1}\left\{\mathcal{R}^{\nu:k}\geq \frac{\epsilon(k-\nu)}{2}\right\}\nonumber\\
	&\leq 1+\frac{2(1+\delta)}{\delta D^*(\rho_1)}\exp\left\{-\frac{\delta \log(1/\alpha)}{2}\right\}+\mathds{1}\left\{\mathcal{R}^{\nu:k}\geq \frac{\epsilon(k-\nu)}{2}\right\}\nonumber\\
	&=1+\frac{2(1+\delta)\alpha^{\delta/2}}{\delta D^*(\rho_1)}+\mathds{1}\left\{\mathcal{R}^{\nu:k}\geq \frac{\epsilon(k-\nu)}{2}\right\}.
\end{align}

Finally, the term $(\ddag)$ can be bounded using Chebyshev-Nemirovski inequality \cite[Lemma B.1]{waudby2025universal} as
\begin{align}
	(\ddag)&\leq 1+\frac{2^s \sigma_s(\rho_1)}{\epsilon^{s/2}m^{s/2-1}(s/2-1)}=
	1+\frac{2^s \sigma_s(\rho_1)}{(s/2-1)}\left(\frac{1+\delta}{D^*(\rho_1)}+\frac{1}{\log(1/\alpha)}\right)
\end{align}
Combining all together, we get 
\begin{align}
    \label{eq:finite_alpha_bound}
	\mathbb{E}_{\rho_0,\nu,\rho_1}\left[[T_{\rm{eSCD}}-\nu]_+\big| H^\nu\right]\leq 2&+ \left\lceil\frac{(1+\delta)\log(1/\alpha)}{D^*(\rho_1)}\right\rceil+\sum^\infty_{k=m+\nu}\mathds{1}\left\{\mathcal{R}^{\nu:k}\geq \frac{\epsilon(k-\nu)}{2}\right\}\nonumber\\
	&+\frac{2^s \sigma_s(\rho_1)}{(s/2-1)}\left(\frac{1+\delta}{D^*(\rho_1)}+\frac{1}{\log(1/\alpha)}\right)+\frac{2(1+\delta)\alpha^{\delta/2}}{\delta D^*(\rho_1)}
\end{align}

Whenever the regret $\mathcal{R}^{\nu:k}$ admits a sublinear bound for all $k>\nu$, the asymptotic guarantee \eqref{eq:optimal_detection_delay} is obtained from \eqref{eq:finite_alpha_bound} for $\alpha\to0^+$. In this regime, the value of $m$ grows as $\log(1/\alpha)$ and the dominant term in \eqref{eq:finite_alpha_bound} becomes $\frac{(1+\delta)\log(1/\alpha)}{D^*(\rho_1)}$ since the terms in the series
\begin{align}
    \sum^\infty_{k=m+\nu}\mathds{1}\left\{\mathcal{R}^{\nu:k}\geq \frac{\epsilon(k-\nu)}{2}\right\}
\end{align} 
become zero for large $m$ under the sublinear upper bound assumption. Making $\delta\to1$ in the limit, one obtains \eqref{eq:optimal_detection_delay}.

We conclude by showing that the central moment of the log-increment satisfies
\begin{align}
\sigma_s(\rho_1)
=
\mathbb{E}_{\rho_1}\Bigl[\bigl|\log\bigl(E_{i^*}^{1}(\lambda^*_{i^*})\bigr)-D^*(\rho_1)\bigr|^s\Bigr]
<\infty .
\end{align}
Without loss of generality, assume that for each $i$ the observations $\{o_i^t\}_{t\ge1}$ are almost surely bounded in $[-1,1]$, and hence the admissible betting parameters satisfy $\lambda_i^t \in (-1,1)$.

Recall that $\lambda^*_{i^*}$ is the maximizer of
\begin{align}
f_D(\lambda) := \mathbb{E}_{\rho_1}\bigl[\log(1+\lambda \hat{o}_{i^*}^1)\bigr],
\end{align}
over $(0,1)$.
We show that if the distribution of $\hat{o}_{i^*}^1$ assigns positive probability mass arbitrarily close to $-1$, then $\lambda^*$ must be bounded away from $1$.

Fix $\delta>0$ and define the event
\begin{align}
E_\delta := \{\hat{o}_{i^*}^1 \in [-1,-1+\delta]\},
\qquad
p_\delta := \Pr(E_\delta).
\end{align}
Decomposing $f_D(\lambda)$ by conditioning yields
\begin{align}
f_D(\lambda)
&=
p_\delta \, \mathbb{E}_{\rho_1}\bigl[\log(1+\lambda \hat{o}_{i^*}^1)\mid E_\delta\bigr]
+
(1-p_\delta)\,\mathbb{E}_{\rho_1}\bigl[\log(1+\lambda \hat{o}_{i^*}^1)\mid \bar E_\delta\bigr].
\end{align}

Since $\log$ is concave and $\hat{o}_{i^*}^1 \le -1+\delta$ on $E_\delta$ and $\hat{o}_{i^*}^1 \le 1$ almost surely, we obtain for all $\lambda>0$
\begin{align}
f_D(\lambda)
\le
p_\delta \log\bigl(1+\lambda(-1+\delta)\bigr)
+
(1-p_\delta)\log(1+\lambda).
\end{align}
Taking the limit $\lambda\to1^-$ gives
\begin{align}
\limsup_{\lambda\to1^-} f_D(\lambda)
\le
p_\delta \log(\delta) + (1-p_\delta)\log(2).
\end{align}
If $\Pr(\hat{o}_{i^*}^1\in[-1,-1+\delta])>0$ for all $\delta>0$, then
\begin{align}
\lim_{\delta\to0}
\bigl[p_\delta \log(\delta) + (1-p_\delta)\log(2)\bigr]
=
-\infty,
\end{align}
which implies
\begin{align}
\lim_{\lambda\to1^-} f_D(\lambda) = -\infty.
\end{align}
Therefore, the maximizer $\lambda^*$ cannot lie arbitrarily close to $1$, and there exists $\varepsilon>0$ such that $\lambda^* \le 1-\varepsilon$.

Since $\lambda^*$ is bounded away from the boundary and $\hat{o}_{i^*}^1$ is bounded, the random variable
\begin{align}
\log\bigl(E_{i^*}^{1}(\lambda^*_{i^*})\bigr)
=
\log(1+\lambda^* \hat{o}_{i^*}^1)
\end{align}
has finite moments of all orders, implying $\sigma_s(\rho_1)<\infty$.
\section{Proof of Proposition \ref{prop:sublinear_regret}}

The regret term of the hyperparameter sequence $(\mathbf{w},\{\{\lambda_i^t\}_{t\geq1}\}_{i=1}^n)$ in \eqref{eq:time_int_regret} can be decomposed into two regret terms that separately account for the regret of weight vector $\mathbf{w}$ and the sequences of bets $\{\{\lambda_i^t\}_{t\geq1}\}_{i=1}^n$ as follows

\begin{align}
	\mathcal{R}^{j:t}&= \max_{\lambda\in\Lambda^*_i}\log E^{j:t}_{i^*}(\lambda) - \log E^{j:t}_{i^*}
	+ \log E^{j:t}_{i^*} - \log {E}^{j:t} \nonumber\\
	&= \mathcal{R}^{j:t}(\lambda^{j:t}_{i^*}) - \mathcal{R}^{j:t}(\mathbf{w}).
\end{align}
where
\begin{align}
	\mathcal{R}^{j:t}(\lambda_{i^*}^{j:t})
	&= \max_{\lambda\in\Lambda^*_i}\log E^{j:t}_{i^*}(\lambda) - \log E^{j:t}_{i^*}, \\
	\mathcal{R}^{j:t}(\mathbf{w})
	&= \log E^{j:t}_{i^*} - \log {E}^{j:t},
\end{align}
are respectively the regret of the bets sequence $\{\lambda_{i^*}^t\}_{t\geq1}$, and the regret of weight vector $\mathbf{w}$.

For any $j<t$, the regret of the  $\mathcal{R}^{j:t}(\mathbf{w})$, satisfies
\begin{align}
    \label{eq:reg_weight_vec}
	\mathcal{R}^{j:t}(\mathbf{w})
	&= \log E^{j:t}_{i^*} - \log\left(\sum^n_{i=1}w_{i}E^{j:t}_i\right)\nonumber\\
	&\leq \log E^{j:t}_{i^*} - \log w_{i^*}E^{j:t}_{i^*}\nonumber\\
	&\leq -\log w_{i^*},
\end{align}
meaning that for any interval $[j,t]$ the regret is bounded by a constant that is inversely proportional to the magnitude of the weight $w_{i^*}$ assigned to the observable $i^*$ associated with the largest expected log-growth \eqref{eq:maximum_log_growth}.

To prove that the regret of the sequence of CBCE betting parameters $\{\lambda^t_{i}\}_{t\geq1}$ has sublinear strongly adaptive regret we resort to the static regret guarantees of universal portfolio schemes \cite{cover1991universal} and on the properties of the CBCE meta-algorithm \cite[Theorem 2]{jun2017improved}.
 
First, we notice that by introducing a slackness parameter $\epsilon_{\lambda}>0$ and restricting the bet ranges $\Lambda_i\in(-u_i^{-1}+\epsilon_{\lambda},\, -l_i^{-1}-\epsilon_{\lambda})$, the log-increment loss used to evaluate experts is bounded as $\max_{\lambda\in\Lambda_i}|\log\bigl(L_{i}^{t}(\lambda)\bigr)|<\log(\epsilon_{\lambda})<\infty$ almost surely. For any observable index $i$ and any interval $\mathcal{I}=[t_1,t_2]$ with $t_2-t_1\geq 2$, the universal portfolio betting parameter sequence $\{\lambda^t_{I,i}\}^{t_2}_{t=t_1}$ in \eqref{eq:expert_bet} satisfies the static regret bound on the log-growth of the test statistic \eqref{eq:baseline_partial_prods} \cite{cover1991universal}
\begin{align}
    \max_{\lambda\in\Lambda_i}\log E^{t_1:t_2}_{i}(\lambda) - \sum^{t_2}_{t=t_1}\log(1+\lambda^t_{I,i}\hat{o}^t_i)\leq \frac{\log(t_2-t_1+1)}{2}+\log(2)
\end{align}

Leveraging the properties of the CBCE meta-algorithm, it is possible to turn the experts' static regret guarantee into a strongly adaptive regret guarantee for the sequence of CBCE betting parameters $\{\lambda^t_{i}\}_{t\geq1}$. In particular, from \cite[Theorem 2]{jun2017improved}, there exists a constant $C<\infty$ such that for $t>j+1$ it holds
\begin{align}
    \label{eq:reg_bet_params}
	\mathcal{R}^{j:t}(\lambda^{j:t}_{i^*})\leq\max_{\lambda\in \Lambda_i}\log E^{j:t}_{i^*}(\lambda) - \log E^{j:t}_{i^*}\leq C\sqrt{(t-j)(7\ln(t)+5)}
\end{align}
A strongly adaptive regret guarantee for the hyperparamaters $(\mathbf{w},\{\{\lambda^t_{i}\}_{t\geq1}\}_{i=1}^n)$ is obtained summing the regret guarantees \eqref{eq:reg_weight_vec} and \eqref{eq:reg_bet_params} as
\begin{align}
	\mathcal{R}^{j:t}\leq C\sqrt{(t-j)(7\ln(t)+5)}-\log w_{i^*}.
\end{align}
This regret certificate, combined with the upper bound \eqref{eq:finite_alpha_bound} provides the following non-asymptotic detection delay guarantee for eSCD and eSCD-CU with CBCE betting,
\begin{align}
    \label{eq:non_asympt_del_ub}
	\tau^*(\rho_1)\leq 2&+ \left\lceil\frac{(1+\delta)\log(1/\alpha)}{D^*(\rho_1)}\right\rceil+\frac{2^s \sigma_s(\rho_1)}{(s/2-1)}\left(\frac{1+\delta}{D^*(\rho_1)}+\frac{1}{\log(1/\alpha)}\right)\nonumber\\
	&+\frac{2(1+\delta)\alpha^{\delta/2}}{\delta D^*(\rho_1)}+\sum^\infty_{k=m+\nu}\mathds{1}\left\{C\sqrt{\frac{(7\ln(k)+5)}{k-\nu}}-\frac{\log w_{i^*}}{k-\nu}\geq \frac{\epsilon}{2}\right\}.
\end{align}
The non-asymptotic guarantee \eqref{eq:non_asympt_del_ub} for $\alpha\to 0^+$ and $\delta\to 1$ recovers the asymptotic guarantee \eqref{eq:optimal_detection_delay}.
\end{document}